\documentclass[9.5pt, journal, compsoc]{IEEEtran}
\usepackage{xcolor}

\usepackage{colortbl}
\usepackage{diagbox}
\usepackage{makecell}
\usepackage{booktabs}
\usepackage{multirow}

\usepackage{amsmath,amsfonts}
\usepackage{bm}
\usepackage{tikz}
\usepackage{cases}
\usepackage{amsthm}
\usepackage{upgreek}

\usepackage[linesnumbered,algoruled,lined]{algorithm2e}

\usepackage{verbatim}
\usepackage{textcomp}
\usepackage{comment}
\usepackage{subcaption}

\newtheorem{theorem}{Theorem}
\newtheorem{remark}{Remark}
\newtheorem{definition}{Definition}
\DeclareMathOperator*{\argmax}{argmax}

\begin{document}

\title{cRVR: A Cache-Friendly Approach to Enhancing Request Privacy for Online Video Services}


\author{Xianzhi Zhang,
        Linchang Xiao,
        Yipeng Zhou,~\IEEEmembership{Member,~IEEE,}
        Di Wu,~\IEEEmembership{Senior Member,~IEEE,}        
        Miao Hu, ~\IEEEmembership{Member,~IEEE,}
        John C. S. Lui,~\IEEEmembership{Fellow,~IEEE,}
        and Quan Z. Sheng
        
\IEEEcompsocitemizethanks{
\IEEEcompsocthanksitem Xianzhi Zhang, Linchang Xiao, Di Wu, and Miao Hu are with the School of Computer Science and Engineering, Sun Yat-sen University, Guangzhou, 510006, China, and the Key Laboratory of Machine Intelligence and Advanced Computing (Sun Yat-sen University), Ministry of Education, China. (E-mail: \{zhangxzh9, xiaolch3\}@mail2.sysu.edu.cn; \{humiao5,wudi27\}@mail.sysu.edu.cn.) (Di Wu is the corresponding author). 
\IEEEcompsocthanksitem Yipeng Zhou and Quan Z. Sheng are with the School of Computing, Faculty of Science and Engineering, Macquarie University, Australia, 2122. (E-mail: \{yipeng.zhou, michael.sheng\}@mq.edu.au). 
\IEEEcompsocthanksitem John C.S. Lui is with the Department of Computer Science \& Engineering, Chinese University of Hong Kong, China (E-mail: cslui@cse.cuhk.edu.hk).
}
}



\IEEEtitleabstractindextext{
\begin{abstract}
    As users conveniently stream their 
    favorite online videos, video request records are automatically stored by video content providers, which have a high chance of privacy leakage. Unfortunately, most existing privacy-enhancing approaches are not applicable for protecting user privacy in video requests, because they cannot be easily altered or distorted by users and must be visible for content providers to stream correct videos. To preserve request privacy in online video services, it is possible to request additional videos that are irrelevant to users' interests so that content providers cannot precisely infer users' interest information. However, a naive redundant requesting approach 
    would 
    significantly degrade the performance of edge caches and increase bandwidth overhead.
    In this paper, we are among the first to propose a \textit{Cache-Friendly Redundant Video Requesting (cRVR)} algorithm for User Devices (UDs) and its corresponding caching algorithm for the Edge Cache (EC), which can effectively mitigate the problem of request privacy leakage with minimal impact on the EC's performance. To tackle the problem, we first develop a Stackelberg game to analyze the dedicated interaction between UDs and EC, and obtain their optimal strategies to maximize their respective utility. For UDs, the utility function is a combination of both video playback utility and privacy protection utility.  We prove the existence and uniqueness of the equilibrium of the Stackelberg game. 
    Extensive experiments are conducted with real traces to demonstrate that cRVR can effectively protect video request privacy 
    by reducing up to 59.03\% of privacy disclosure 
    compared to baseline algorithms. Meanwhile, the caching performance of EC is only slightly affected.
\end{abstract}
\begin{IEEEkeywords}
Request Privacy, Redundant Video Requesting, Edge Caching, Stackelberg Game.
\end{IEEEkeywords}
}
\maketitle

\section{Introduction}\label{sec:introduction}

\IEEEPARstart{O}{nline} video streaming is one of the most profitable services for our daily life.
In the past decade, we have witnessed an unprecedented increase in online video and user population. According to~\cite{GlobalMediaInsight2024}, YouTube will provide 5 billion online videos for more than 122 million Internet users per day and the daily playback time will exceed a total of 1 billion hours in 2024.
In online video services, \textit{content providers (CPs)} can stream various videos such as movies, news, and TV episodes to billions of Internet users through \textit{edge cache (EC)}, \textcolor{black}{
which can cache popular video contents on edge servers to reduce service latency~\cite{Tran2019, Qu2020, Yu2019, Han2022EdgeIntelligence, Yu2021}.} 

With the proliferation of online video markets, a rising concern is \emph{request privacy leakage}.
As users request videos from online content providers, content providers will automatically record request traces. 
It is challenging to prevent content providers from abusing user request traces. 
Besides, it is possible to further infer sensitive privacy such as gender, age, political views, and hobbies through analyzing users' request traces~\cite{Kang2022, Zhou2019a, Zhang2019a, Ma2017, He2022}. 
The leakage of request privacy may induce other anomalies, e.g., the spread of spam and scams~\cite{Ni2020, Xiao2018}. 
Therefore, a strategy to effectively protect user privacy is urgently needed to boost the security level of online video systems.
Nevertheless, user requests exposed in online video services cannot be easily protected, stemming from the following three challenges. 

{\color{black}
\textit{First}, users' request traces are automatically captured by 
a 
CP when users fetch a specific video content. While there are privacy protection methods, such as encryption-based methods in Information-Centric Networking (ICN)~\cite{Yuan2016, Xue2019}, these methods primarily focus on protecting the content information privacy of video providers or edge nodes. However, they cannot conceal the request privacy of users because user requests, i.e., indexes of movies,  are still visible to content providers and cannot be obfuscated by these methods.

\textit{Second}, user requests are generated based on users' playback preferences, which cannot be arbitrarily modified.
Therefore, some privacy protection methods distorting original information are ineffective. 
For example, differential privacy is widely applied to protect user privacy in recommender systems~\cite{Zhou2019a, Nguyen2016, Niu2020}, which assumes that users can arbitrarily alter their preferences to be exposed without affecting users' experience of recommendation.
However, if users randomly request videos, CPs and EC may miss many videos of users' genuine interest, resulting in disastrous user viewing experiences.
}

\textit{Third}, 
efficient video caching heavily relies on video population or user preferences inferred from video request records~\cite{zhang2022,Li2023}. Some naive methods, such as randomly fetching redundant videos to distort the actual records, may bring serious challenges to predicting video popularity, lowering the caching efficiency on EC.
In addition, the communication overhead between CPs and EC becomes substantial, and as such, spare video request generates additional communication traffic. 
This implies that user utility contradicts the utility of CPs and EC when redundant video requests are injected into online video service systems. Therefore, redundant video requests should be judiciously generated to mitigate the influence on edge video caching performance. 

{\color{black}
To address these challenges, we propose an innovative privacy-preserving redundant request algorithm that balances privacy and efficiency. 
Unlike current pre-fetching algorithms that solely consider caching performance, we introduce an information-theoretic utility function to guide UDs' strategies in generating more balanced requests that consider both privacy and caching efficiency. 
Furthermore, we develop a game-theoretic framework to facilitate the cooperation between the edge and user devices. Specifically, our \textit{Cache-Friendly Redundant Video Requesting (cRVR)} algorithm for UDs and the novel \textit{Edge Video Caching (EVC)} algorithm for the EC jointly protect privacy through redundant video requests without significantly compromising the caching performance. 
We analyze the interactions between UDs and the EC using the Stackelberg game to model the conflicting interests between UDs and the EC when redundant requests are injected. 
This dual approach ensures that the generated redundant requests can reconcile privacy protection and the overall efficiency of the edge video caching system.
In a nutshell, our main contributions 
are summarized as follows: 
\begin{itemize}
\item To the best of our knowledge, we are among the first to propose the cache-friendly redundant requesting algorithm to protect request privacy in online video streaming systems with a minimal impact on video caching performance. An information-theoretic utility function is designed to 
guide UDs' strategies in generating video requests.
\item To analyze and minimize the impact of injected redundant requests, 
we model 
an 
interaction between UDs and the EC as a Stackelberg game. The optimal strategies for both parties are derived to optimize their respective utilities. 
We theoretically prove the existence and uniqueness of the equilibrium of the game.
\item We conduct extensive experiments by leveraging real-world video request traces collected from Tencent Video to verify that our algorithm can effectively diminish privacy disclosure compared to baseline algorithms. Meanwhile, 
cRVR only slightly compromises video caching performance on the EC. 
\end{itemize}
}

The remainder of this paper is organized as follows.  We 
first discuss the related works in Sec.~\ref{Sec:Related}. Then, the system model, including architecture, threat and privacy model, is described in Sec.~\ref{SYSTEM MODELS}. 
The utility functions of EC and UDs are elaborated in Sec.~\ref{Sec:utility function}.
Then, we formulate the problem to maximize utility and analyze the Stackelberg game in Sec.~\ref{PROBLEM FORMULATION}. The experimental results are presented in Sec.~\ref{Experiment} and finally, we conclude our paper in Sec.~\ref{sec:conclusion}.

\section{RELATED WORK}\label{Sec:Related}
\subsection{Privacy Leakage in Online Video Services}
Users mainly face three types of privacy threats in online video services.
Content providers (CPs) 
are the main contributors 
to the first type of privacy threat.
\textcolor{black}{Driven by commercial purposes, 
CPs 
have a strong incentive to gather and mine users’ private information~\cite{Ni2020, Nguyen2016}, which includes but is not limited to geographical location~\cite{Amini2011}, behaviour patterns~\cite{Zhang2019a,Han2024NOSSDAV}, personal information~\cite{Zhou2019a}.}
The above information can help to improve the CPs' service quality in terms of content caching~\cite{Shi2021} and recommendation~\cite{Niu2020, Guerraoui2017}.

The second type of privacy threat comes from the third-party edge cache (EC).
\textcolor{black}{Due to the latency issue, 
CPs commonly leverage the third-party edge cache to provide services, such as video caching~\cite{Han2024IOTJ}, streaming~\cite{Han2022JASC}, and data collection~\cite{Sanguanpuak2021, Xu2022}.}
The edge cache owned or deployed by third parties~\cite{Xu2020a, Ni2020} also have incentives to obtain private data of users for various purposes (e.g., QoS enhancement~\cite{Xu2020a, Ma2017}, economic benefits~\cite{Li2016c}). 
The curiosity and accessibility of the EC are not within the control of content providers, 
which 
brings additional risk for user privacy leakage~\cite{Xu2020a, Cui2020}.

Lastly, privacy can be infringed by user devices during video transmission~\cite{Yan2021}, distribution~\cite{Cui2020} and cache including D2D cache~\cite{Rui2022} and proactive client cache~\cite{Nikolaou2016}. 
Both user devices and edge devices can be leveraged to cache videos to improve users' 
quality of experience (QoE). It is widely assumed that UDs are trustworthy in existing works, though these devices may be compromised in reality~\cite{Wang2019b, Qiao2022}. 
It has been shown by previous works that attacks can be designed based on cached videos, such as cache side-channel attacks \cite{Acs2019, Sivaraman2021} and monitoring attacks~\cite{Nikolaou2016, Yan2021}. 
These attacks can obtain private data without permission, resulting in severe privacy leakage in online video services.

\subsection{Privacy Protection in Edge Caching Systems}
In general, there are mainly three 
groups of privacy protection methods in edge caching systems.
The {first} 
group 
is \textit{cryptography-based methods}, e.g., encryption transmission~\cite{Araldo2018a,Yuan2016}, secured multi-party computing~\cite{Andreoletti2018} and blockchain-based methods~\cite{Qian2020}.
For example, Yuan \emph{et al.}~\cite{Yuan2016} proposed an encrypted video delivery protocol to prevent external attackers from accessing user privacy at in-network caching.
Andreoletti \emph{et al.}~\cite{Andreoletti2018} proposed an efficient edge caching scheme with secured multi-party computation to enhance the privacy protection of content providers and Internet service providers.    
Qian \emph{et al.}~\cite{Qian2020} proposed a blockchain-based edge caching architecture to protect user privacy. 
{\color{black}
Unfortunately, 
these methods not only incur 
a 
heavy computational load~\cite{Ni2021}, but also are ineffective in preventing content providers from abusing personal private information~\cite{zhang2024}.

The {second} 
group of methods are \textit{disturbance-based}, such as differential privacy (DP)~\cite{Zhou2019a,Niu2020,Zhu2021,Zhang2022a}.
DP was introduced to protect personal interests in edge networks by~\cite{Zhou2019a, Nguyen2016, Niu2020}, which assumed that users can easily distort their exposed user-item interaction information. 
Zhou~\emph{et al}~\cite{Zhou2019a} proposed a privacy-preserving multimedia content retrieval system using DP and trust mechanisms to protect users' information while making recommendations and caching in an edge network. 
Zhu~\emph{et al.}~\cite{Zhu2021} added DP noises to user-item rating vectors before transmitting them to a global server when training the caching model, while Zhang~\emph{et al.}~\cite{Zhang2022a} proposed a DP-based caching method based on the private location transfer model and user preferences. 
It is worth mentioning that DP was introduced to protect user privacy only in the model training process~\cite{Zhou2019a, Zhu2021, Zhang2022a} or assuming that users could easily distort their exposed information~\cite{Zhou2019a, Niu2020}, which is insufficient to preserve the request privacy in online video systems. 
Moreover, Zhang~\emph{et al.}~\cite{Zhang2023} and Nisha~\emph{et al.}~\cite{Nisha2022} preserved location privacy in edge caching systems by generating a $K$-anonymous set to distort the genuine request when a user sends a location-based query.

The {third} 
group of methods 
are \textit{distributed-based}, such as federated learning~\cite{Kang2022}. 
Despite their effectiveness in safeguarding record datasets~\cite{Qiao2022, Yu2021a, Yu2020b} during the caching algorithm training process, these methods fail to prevent content providers or other adversaries from collecting users' viewing traces when 
they 
retrieve video contents for watching or caching.}
To overcome the inefficacy of the 
existing methods, 
we aim to develop a novel privacy protection method to enhance request privacy by generating redundant requests, 
without 
sacrificing the performance of edge caching. 
{\color{black}
Specifically, we aim to design a privacy-preserving strategy, which can be applied to various privacy-sensitive video streaming applications, e.g., short video recommendations~\cite{Zhou2019a, Guerraoui2017} and social network video transmissions~\cite{Xu2020a, Nikolaou2016, Wang2019b}, to enhance user privacy. This approach can be implemented by device manufacturers and operating system providers to provide system-level protection for user actions from being tracked by content providers.

\begin{table*}[!tbp]
\renewcommand\arraystretch{1.2}
\caption{Main notations used in the paper.}
\vspace{-4mm}
\begin{center}
\renewcommand{\arraystretch}{1.2}
\rowcolors{2}{white}{gray!25} 
\begin{tabular}{p{2.1cm}<{\centering} m{\linewidth-3.1cm}}
\toprule
Notation   & \multicolumn{1}{c}{Description}\\ 
\midrule
$i$ / $u$ / $t$ &  The index of any video content / UD / time slot, respectively. \\
 $\mathcal{I}$ / $\mathcal{U}$ / $\mathcal{T}$ &  The space of all video contents / UDs / time slots, respectively. \\
 $\pi(i) / \mathcal{K}_{\pi(i)}$ &  The category index of the video content $i$ and the set of video content  belonging to the category $\pi(i)$, respectively.\\
 $\bm{c}$  &  The normalized size vector of all video contents. \\
 $\bm{e}^t$ &The edge caching strategy of the EC at the time slot $t$.\\
 $\bm{x}_u^t$ / $\bm{y}_u^t$ / $\bm{a}_u^t$ & The genuine request vector/ redundant vector / public request vector of the UD $u$ at the time slot $t$, respectively.\\
 $\bm{w}_u^t$ / $\bm{r}_u^t$& The private / public request profile vector of the UD $u$ at the time slot $t$, respectively.\\ 
 $m_i^t$& The total number of UDs with public profile $r_{u,i}^t=1$ for the video content $i$ at the time slot $t$, i.e., $m_i^t = \sum_u r_{u,i}^t$.\\ 
 $P_{\text{Z}_i}$
 & The probability mass function of the disclosure probabilistic model, where $\text{Z}_i$ is a Bernoulli random variable. \\ 
 $\text{H}_u^t$ /  $H(r_{u,i}^t)$ & The privacy disclosure degree of the UD $u$ with the entire profile $\bm{r}_{u}^t$ / profile item $r_{u,i}^t$ at the time slot $t$, respectively.\\ 
 $\Delta n_i^t$& The number of UDs who first-time request the video content $i$ at the time slot $t$, i.e., $\Delta n_i^t = \sum_{\forall u: r_{u,i}^t =0}a_{u,i}^t$.\\
 $U_E$ / $U_u $ & The utility function for the EC / UD $u$, respectively. \\
 $L_{E,i}$ / $L_{u,i}$ & The profit function of QoE for the EC / UD $u$ to the video content $i$, respectively. \\
 $C_{E,i}$ / $C_{u,i}$ & The caching / requesting cost function for the EC / UD $u$ to the video content $i$, respectively. \\
$\epsilon_E$ / $\epsilon_u$& The unit caching cost for the EC / UD $u$, respectively.\\
 $D_{u,i} $ & The privacy disclosure function for the UD $u$ with respect to the video content $i$.  \\
 $d_{u,i}^t$ / $\hat{p}_{i}^t$ &  The view preference of the user $u$ for the video $i$. / The popularity of the video $i$ at time slot $t$.\\
 $\beta_E$ / $\beta$ / $\gamma$ &  The tunable parameters in utility function for EC's caching cost / UDs' requesting cost / UDs' privacy cost, respectively.\\
 $\bm{e}^{t*}$ / $\bm{a}_u^{t*}$ / $\bm{a}_{-u}^{t*}$ & The optimal EC's caching strategy. / The optimal public request strategy for the UD $u$ / other UDs except $u$, respectively.\\
$\hat{a}_{i}^{t}$ / $\Delta \hat{n}_i^t$ & The estimated value for $\sum_{\forall u}a_{u,i}^{t*}$ and $\Delta n_i^{t*}$, respectively, calculating bt the moving average value method. \\
\bottomrule
\end{tabular}
\end{center}
\label{Table:Major Notations}
\vspace{-4mm}
\end{table*} 
}

\begin{figure}[tpb]
\centering
\includegraphics[width=\linewidth]{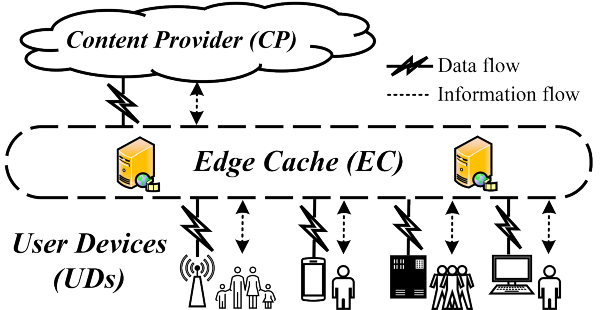}
\caption{The system architecture of a typical online video system.}
\label{Fig:EDGECACHING}
\vspace{-4mm}
\end{figure}

\section{System Models}\label{SYSTEM MODELS}
In this section, we introduce the system model, the user privacy model and the threat model.
{\color{black}
To begin with, we 
first briefly introduce our framework architecture and the scenario studied by our work. 
Fig.~\ref{Fig:EDGECACHING} illustrates a typical architecture of online video systems~\cite{zhang2022}, which contains three major entities, namely, a \textit{Content Provider (CP)}, an \textit{Edge Cache (EC)}, and \textit{User Devices (UDs)}. Their main functionality and behavior are described in Sec.~\ref{sec:System Architecture}.
Generally, 
when a user requests to watch a video, an EC or CP records the trace, 
and provide 
the required video streaming service to the UD.
This process, however, can lead to privacy leakage, as detailed in our risk model in Sec.~\ref{sec:Threat model}.
To protect user privacy, we design an innovative redundant request strategy for user devices. 
The information-theoretic privacy model and user utility function are introduced in 
Sec.~\ref{sec:privacy model} and~\ref{Sec: Utility of UDs}, respectively. 
Simultaneously, to mitigate the overhead resource consumption and efficiency degradation by generating redundant requests, we develop an edge caching utility function for the EC to optimize edge-side efficiency, which is further discussed in Sec.~\ref{sec: Utility of EC}. 
For convenience, we summarize the frequently used notations 
in Table \ref{Table:Major Notations}.
}

\subsection{System Architecture} \label{sec:System Architecture}

\subsubsection{\textbf{Content Provider (CP)}}\label{Sec:Content Provider}
A CP owns all videos denoted by $\mathcal{I}=\{1,\cdots,i,\cdots,|\mathcal{I}|\}$, which can be requested by UDs. These videos are organized into different categories such as movies, news, and sports. 
Let $\pi(i)$ represent the category of any video $i$ and $\mathcal{K}_{\pi(i)}\subseteq\mathcal{I}$ denote all videos in category $\pi(i)$.
Due to the large scale of modern online video systems, CP will cache videos by utilizing the edge cache, which is within the proximity of UDs to alleviate communication traffic load and to reduce transmission latency~\cite{zhang2022}.
Meanwhile, CPs will also proactively collect users' request traces to predict video popularity and users' interests~\cite{Ma2017, Zhou2019a}.

\subsubsection{\textbf{Edge Cache (EC)}}  
An 
EC can effectively offload the communication traffic of CPs. It makes caching decisions by taking video sizes and video popularity into account. 
Let $c_i\in(0,1]$ denote the normalized size of the video $i$, which is obtained by dividing the size of the video $i$ by the size of the largest video in $\mathcal{I}$. 
Without loss of generality, we consider a generic caching strategy of a particular EC entity (e.g., an edge server provides online video services in a certain area).
To ease our analysis, we suppose that the timeline is split into multiple time slots, i.e., $\mathcal{T}=\{1,\cdots,t, \cdots, T\}$.
Let the vector $\bm{e}^t= \left[e_{i}^t\right]^{|\mathcal{I}|},\, e_i^t\in [0,1],\forall i,t$ denote the caching strategy of the EC at time slot $t$, where $e_{i}^t$ is a real number indicating the ratio of the video $i$ cached in the EC. $e_{i}^t = 1$ means that the entire copy of the video $i$ is cached by the EC. 
The EC will update $\bm{e}^t$ periodically according to public information, e.g., request records, subject to the caching cost of the EC. 

Users in online video systems such as the one presented in Fig.~\ref{Fig:EDGECACHING} are susceptible to privacy invasion from at least two aspects: 
(1) \emph{Curiosity of CPs/EC}: CPs and EC have strong motivations to collect users' sensitive traces to enhance the quality of their services, such as recommendation~\cite{Guerraoui2017, Zhou2019a} and video caching~\cite{Zhou2019a, Shi2021}.
It is almost impossible to prevent CPs and EC from abusing collected users' traces~\cite{Xu2020a, Ni2020}. 
(2) \emph{Accessibility of EC}: To accommodate the huge user population, CPs widely leverage EC to provide video streaming services. 
Nonetheless, EC may be operated by third parties, such as Internet Service Providers (ISPs)~\cite{Xu2020a}, which can automatically obtain users' request traces. 
It is even 
more 
difficult for CPs to prevent third parties from abusing users' traces or malicious invasions launched by attackers to obtain users' traces~\cite{Araldo2018a, Ni2020}.

\subsubsection{\textbf{User Devices (UDs)}}  
The set of all users, (i.e., UDs), is denoted by $\mathcal{U}=\{1,\cdots,u,\cdots,|\mathcal{U}|\}$. UDs are always located at the edge of the network.   
UDs can obtain videos from 
their 
local cache, the closest EC or the remote CP with a descending priority.
Normally, UDs request videos from the EC (or a CP) according to their view preferences, which unavoidably leaks user privacy~\cite{Ni2020, Guerraoui2017}. 
In our work, we significantly expand the video request strategy space of UDs by allowing UDs to proactively request and cache videos, including additional videos that are irrelevant to users' interests to conceal their viewing privacy. 
{\color{black}
When a user submits a genuine request $\bm{x}_{u}^t$ at time slot $t$, the UD $u$ will make redundant requests $\bm{y}_{u}^t$ to distort disclosed view preferences.  
Let $\bm{a}_{u}^t$ denote the public request vector sent by the UD $u$ at the time slot $t$, where $a_{u,i}^t = x_{u,i}^t |y_{u,i}^t$, and all $\bm{a}_{u}^t,\bm{x}_{u}^t, \bm{y}_{u}^t$ are $|\mathcal{I}|$ dimension vectors.}
Note that $\bm{x}_u^t$ is spontaneously generated by the UD $u$, and thus it is out of the scope of our strategic design. In contrast, $\bm{y}_u^t$ can be altered and our study focuses on how to generate $\bm{y}_u^t$ hereafter. 
{\color{black}
In our model, we assume that UDs are equipped with a certain small caching capacity that can be used to cache videos. Downloading videos in $\bm{y}_{u}^t$ can be regarded as the prefetching and proactively caching operation~\cite{Zhou2019a, Shi2021, Wang2019b} on the UD $u$ to further reduce network bandwidth consumption.
Due to the limited caching space, 
$u$ updates its cached videos according to the specific caching strategy\footnote{The caching algorithms in UDs are mature in the existing works, which are not the focus of this work.} (e.g., LRU~\cite{Quan2020}, utility-based methods~\cite{zhang2022, Wang2019b, Shi2021}) when its cache space is full.}

To better quantify the privacy leakage, we further define the public and private information for UDs. As all request traces can be recorded by the EC (or CPs), the request records will be the public information of UDs.
Let $\bm{r}_u^t=[r_{u,i}^t]^{|\mathcal{I}|}, r_{u,i}^t\in\{0,1\},\forall u,i,t$ denote the vector of public request profiles of the UD $u$ before time slot $t$. If $r_{u,i}^t=1$, it indicates that the UD $u$ has already requested $i$ before time slot $t$. Otherwise, the UD $u$ never requests the video $i$ before $t$.
The relation between $r_{u,i}^t$ and $a_{u,i}^t$ is
\begin{equation}
r_{u,i}^t = 1 - \prod\limits_{{\tau} < t}(1-a_{u,i}^{\tau}).
\end{equation}
The private profile is  $\bm{w}_u^t=[w_{u,i}^t]^{|\mathcal{I}|},w_{u,i}^t\in\{0,1\},\forall u,i,t$ for the UD $u$, which indicates whether 
$u$ has watched the video $i$ until $t$, i.e., 
\begin{equation}w_{u,i}^t = 1 - \prod\limits_{{\tau} < t}(1-x_{u,i }^{\tau}).
\end{equation}  
Note that the private profile $\bm{w}_u^t$ containing 
$u$'s privacy is sensitive and should be protected. The gap between $\bm{r}_u^t$ and  $\bm{w}_u^t$ includes redundant requests to conceal the privacy of $\bm{w}_u^t$.

\subsection{Threat Model}\label{sec:Threat model}
In our proposed framework, privacy threats in video requests primarily arise from the disclosure of users' video-watching patterns and preferences during interactions with 
an 
online video system. Historical video requests and fetching activities may inadvertently reveal sensitive information, enabling EC (utilized by CPs to enhance video streaming efficiency) to exploit insights into individual user preferences. Unauthorized access to such information poses a privacy threat, allowing CPs (or EC) to infer personal preferences and potentially breach user privacy.
The threat model settings mainly involve two distinct roles as follows: 
\begin{itemize}
    \item \textbf{Attackers, e.g., CPs (or EC)}: The primary threat comes from CPs, driven by the desire to improve Quality of Service (QoS) by deploying the EC infrastructure near users. Therefore, for simplicity, we consider CPs and the EC as the same risk entity in our threat model. CPs (or EC) are assumed to collect and analyze users' request traces, including detailed information about video-watching patterns. The attack scenario entails exploiting private information to gain insights into individual user behaviors and facilitating personalized recommendations or targeted advertising.
    \item \textbf{Defenders, e.g., UDs}: UDs play a crucial role in protecting user privacy by implementing privacy-preserving mechanisms, 
    which 
    inject noises into the video pre-fetching process, challenging CPs (or EC) to extract sensitive information. UDs act as the first line of defence against privacy threats, ensuring confidential user request privacy while maintaining a positive QoE in online video systems.
\end{itemize}

\subsection{Privacy Model}\label{sec:privacy model}

We proceed to formally define privacy leaked by users based on their public profiles, i.e., request traces. 
A previous work~\cite{Guerraoui2017} has fully explored the amount of privacy disclosed given that a user clicks an item or not in recommender systems. 
We leverage this definition to model the privacy leakage of users in online video systems. 

{\color{black}
\begin{remark}
Intuitively,
it is more difficult to infer a user's interest and discriminate one user from others if its public request profile is 
similar to others. 
In other words, a user profile with a significant discrepancy from others will expose more privacy 
risk 
such that the adversary can easily identify the user's interest.
\end{remark}}

Considering a popular video $i_0$, the probability of a user $u_0$ requesting and watching $i_0$ is higher than that of other cold videos~\cite{zhang2022}. The event $r_{u_0, i_0}^t=0$ will expose the privacy that the user likely does not prefer the video $i_0$. In contrast, if the user $u_1$ requests for a cold video $i_1$,  the event $r_{u_1, i_1}^t=1$ 
implies 
that 
$u_1$ likes the video $i_1$, 
and this makes it easy to distinguish the view preference of 
$u_1$ from other users. 

Based on the above intuition, we adopt a notion of \textit{self-entropy}, defined in~\cite{Gray2011} and leveraged by~\cite{Guerraoui2017}, to measure the distance between a user profile and the public profile. Through self-entropy, we establish a disclosure probabilistic model to quantify the privacy of user profiles.
Let vector $\bm{m}^t=\sum_{\forall u}\bm{r}_{u}^t$ denote the total number of requests for each video until time slot $t$, which can be maintained independently by all UDs in the system given public profiles $\bm{r}_u^t,~\forall u$.
Based on the aggregated item $m_i^t$ in vector $\bm{m}^t$, a disclosure probabilistic model (i.e., a random variable $\text{Z}_i$) can be defined to evaluate disclosure privacy.
Here, $\text{Z}_i$ is a Bernoulli random variable (i.e., $\text{Z}_i \in\{0,1\}$), representing the event whether the video $i$ is requested or not.
The probability mass function of $\text{Z}_i$ is determined by:
\begin{equation}
\begin{split}
P_{\text{Z}_i}(r) =
\left\{\begin{aligned}
&\frac{|\mathcal{U}|-m_i^t}{|\mathcal{U}|},&
\begin{array}{l}
r=0;
\end{array} \\
&\frac{m_i^t}{|\mathcal{U}|},&
\begin{array}{l}
r=1.
\end{array}
\end{aligned}\right.
\end{split}
\label{EQ:disclosure probabilistic distribution}
\end{equation}

Given the disclosure probabilistic model, we now quantify the privacy disclosure degree of any public profile item $r_{u, i}^t,\forall i,\forall u,\forall t$ by calculating its value of self-entropy~\cite{Guerraoui2017, Gray2011} via Eq.~\eqref{EQ:H_ui}. A high \textit{information} value implies that there is a big gap with other users, 
leading 
to a higher privacy disclosure.
\begin{equation}
H(r_{u,i}^t)=-\log P_{\text{Z}_i}(r_{u,i}^t).
\label{EQ:H_ui}
\end{equation}

Similarly, we quantify the privacy disclosure for all videos. Let $\textbf{Z}=\left[\,\text{Z}_0,\text{Z}_1.\dots,\text{Z}_i,\dots,\text{Z}_{|\mathcal{I}|}\,\right]$ denote the probabilistic models for all videos, and all $\text{Z}_i$'s are independent.
The privacy disclosure of the user $u$ denoted by $\text{H}_{u}^t$ becomes
\begin{equation}
\begin{split}
\text{H}_{u}^t = \sum_i H(r_{u,i}^t) =  -\sum_{i}\log\left(P_{Z_i}(r_{u,i}^t) \right).
\label{EQ:H_u}
\end{split}
\end{equation}

Hence, we can address how a user should minimize privacy disclosure. The privacy disclosure $H(r_{u,i}^t)$ is large if a user requests an unpopular video $i$ (with a small $m_{i}^t$) because such request for unpopular videos can easily reveal a user's personal view preferences.
In contrast, the privacy disclosure is smaller if a user requests a popular video $i$ (with a large $m_{i}^t$). 
In the extreme case, if 
$u$ only requests videos that have been requested by all other users, $\text{H}_{u}^t$ equals $0$, implying that there is no privacy disclosure because it is impossible to infer personal view preferences from its disclosed public profile.
Recall that $\bm{m}^t=\sum_{\forall u}\bm{r}_{u}^t$, we can also observe how users affect each other. $m_i^t$ is determined by requests of all users, which in turn affect the privacy disclosure of individual users. Therefore, the request decisions of each user can also change the privacy disclosure of others.

\section{Utility Functions in Privacy-aware Video Streaming Systems}\label{Sec:utility function}
 
At a high level, the objective of UDs is to conceal privacy and meets their playback requirements concerning the EC caching strategy. For the EC, the objective is to satisfy user requests at a low cost.  
Both UDs and the EC try to maximize their utility functions when making video requesting or caching decisions.

\subsection{\textbf{Utility of UDs}}\label{Sec: Utility of UDs}
UDs launch redundant requests to meet their playback requirement and distort disclosed view preferences. 
From  UDs' perspective, the utility consists of three parts: \textit{caching benefit}, \textit{privacy disclosure}, and \textit{caching cost}. Note that the utility of each UD is not only determined by the request strategy of the UD $u$ but also is dependent on other UDs' requests and the EC's caching decisions. Given the caching decision $\bm{e}^{t}$ of the EC and other UDs' strategies $\bm{a}_{-u}^{t}$, the utility function of 
$u$ can be defined as:
\begin{equation}
\begin{split}
&\mathrm{U}_u^t(\bm{e}^t,\bm{a}_u^t,\bm{a}_{-u}^t)=\sum_{\forall i: x_{u,i}^t =0}\mathrm{U}_{u,i}^t(e_i^t,a_{u,i}^t,\bm{a}_{-u,i}^t)=\\
&\sum_{\forall i: x_{u,i}^t =0} L_{u,i}^t(e_i^t,a_{u,i}^t)-\gamma\,D_{u,i}^t(a_{u,i}^t,\bm{a}_{-u,i}^t)-\beta\,C_{u,i}(a_{u,i}^t).
\label{Eq:U_u^t}
\end{split}
\end{equation}
Here, $a_{u,i}^t\in\{0,1\}$ and $L_u$ represent the caching benefit for 
$u$, $D_u$ is the privacy cost and $C_{u}$ is the caching cost. Note that $\gamma>0$ 
and $\beta>0$ are tunable parameters of privacy and caching cost for UDs, respectively. {\color{black} Users can set privacy parameters and cost parameters according to their 
preferences. In this way, they can strike the balance between 
privacy and request costs according to their preferences and budget constraints.}

{
The utility definition in  Eq.~\eqref{Eq:U_u^t}  can be further explained as follows. 
The caching on UDs complements the caching on the EC. If the EC has cached the video $i$, it can provide high-quality video streaming for users, and thus repeatedly caching the video $i$ on UDs receives no benefit.  Otherwise, a UD can gain additional caching benefits if 
it 
caches a video that the EC has not cached.
According to the prior study~\cite{Wang2019b}, the benefit of proactively caching videos on UDs can be formulated as the product of the view preference, video popularity, and video size.

To summarize, the benefit for 
$u$ to request and cache the video $i$ with action $a_{u,i}^t$ concerning the EC's caching decision $e_i^t$ 
can be calculated by
\begin{equation}
\begin{split}
L_{u,i}^t\left(e_i^t,a_{u,i}^t\right) = a_{u,i}^t\ d_{u,i}^t\,\hat{p}_{i}^t\,c_{i}\,(1-e_i^t).
 \label{EQ:L_u,i}
\end{split}
\end{equation}
Here, $d_{u,i}^t$ in Eq.~\eqref{EQ:d_u,i^t} denotes the view preference of the user $u$ for the video $i$ and $\hat{p}_{i}^t$ in Eq. \eqref{EQ:hat p_u,i^t} is the popularity of the video $i$ at time slot $t$. The implication of Eq.~\eqref{EQ:L_u,i} is that the caching benefit can be brought by requesting and caching a popular video of the category falling in a user's interest, and the EC misses this video. $d_{u,i}^t$ and $\hat{p}_{i}^t$ will be further discussed as follows.}

According to~\cite{Wang2019b}, the view preference of the user $u$ for the video $i$, denoted by $d_{u,i}^t$, is relevant to historical viewing records of the same video category. 
Given the history of watched videos $\bm{w}_u^t$, 
$d_{u,i}^t$ is defined as 
\begin{equation}
 d_{u,i}^t = (1 - w_{u,i}^t)\cdot \frac{\sum_{\forall i'\in \mathcal{K}_{\pi(i)}} \, w_{u,i'}^t}{t}.
 \label{EQ:d_u,i^t}
\end{equation}
Here, $d_{u,i}^t$ is only valid if the user has not watched the video $i$~\cite{Wang2019b}. 
Recall that $\pi(i)$ represents the category ID of the video $i$. The term $\sum_{i'\in \mathcal{K}_{\pi(i)}}\, w_{u,i'}^{t}/t$ estimates the average view preference of $u$ for the video category $\pi(i)$ over $t$ time slots. Note that the computation of the user utility is based on $\bm{w}_u^t$, the private view history of the user $u$. Thus, users keep their utility functions as private information without disclosing them to others. 

A prior work~\cite{Shi2021} considered a time-varying dynamic model of video popularity, which can be used to model video popularity as: 
\begin{equation}
 \hat{p}_{i}^t = \sum_{{\tau}<t}\sum_{u'}a_{u',i}^{\tau}\exp[-\delta(t-{\tau})],
 \label{EQ:hat p_u,i^t}
\end{equation}
where $\delta$ is the decaying parameter.
Note that the actual viewing records are private information, and thus the popularity of a video $i$ can only be computed based on the public information $a_{u,i}^t$.

Since requesting videos has the risk of disclosing privacy, 
users should factor privacy disclosure into their utility functions. 
Given the aggregated information $m_i^t$, we have defined the privacy disclosure by Eq.~\eqref{EQ:H_ui}. 
However, when users make request actions, privacy disclosure is a dynamic variable.   
Denote $\Delta n_i^t$ as the number of UDs who request the video $i$ at time slot $t$ for the  first time, i.e., $\Delta n_i^t = \sum_{\forall u': r_{u',i}^t =0}a_{u',i}^t
$, where $u'$ represents any UD who has never requested the video $i$ before time slot $t$. 
Thus, the relative change of privacy disclosure is:
\begin{equation}
\begin{split}
D_{u,i}^t&\left(a_{u,i}^t,\right.\left.\bm{a}_{-u,i}^t\right)=
r_{u,i}^t\log \frac{m_i^t }{m_i^t+\Delta n_i^t}  \\ 
&+\widetilde{r}_{u,i}^t\log \frac{ n_i^t}{ a_{u,i}^t(m_i^t +\Delta n_i^t)+\widetilde{a}_{u,i}^t(n_i^t-\Delta n_i^t)},
\label{EQ:D_u,i}
\end{split}
\end{equation}
where $\widetilde{r}_{u,i}^t = 1 - r_{u,i}^t$ and $\widetilde{a}_{u,i}^t = 1 - a_{u,i}^t$. {\color{black}
In addition, we denote $n_i^t=K-m_i^t$ as the popularity difference between the most popular video and any video $i$, where $K\geq\Delta n_i^t+m_i^t, \forall i$ is the number of users who have requested the most popular video in the whole online video system.
Eq.~\eqref{EQ:D_u,i} illustrates the complexity of privacy disclosure, demonstrating how individual user requests can impact the privacy exposure of others.} The implications of this interaction include:
\begin{itemize}
    \item \textbf{Case I, $r_{u,i}^t= 1$}: The privacy disclosure is reduced if more users request the same video because it becomes more difficult to identify the personal preferences of the video $i$. 
    \item \textbf{Case II, $r_{u,i}^t=0$ and $a_{u,i}^t=0$}: The privacy disclosure will be dilated if a popular video is not requested by 
    $u$, since it is easier to identify videos disliked by 
    the user. 
    \item \textbf{Case III, $r_{u,i}^t=0$ and $a_{u,i}^t=1$}: The change of privacy disclosure is determined by $\frac{m_{i}^t+\Delta n_i^t}{n_i^t}$. It implies that the request for the video $i$ dilates the privacy disclosure, such that identifying view preferences of 
    $u$ becomes easier if $i$ is an unpopular video, and vice versa.
\end{itemize}

Additionally, requesting redundant videos also incurs additional traffic charges and bandwidth costs.
Therefore, we take into account the action cost on UDs 
as the following:
\begin{equation}
\begin{split}
\label{EQ:C_u,i}
C_{u,i}\left(a_{u,i}^t\right)=a_{u,i}^t\,c_{i}\,\epsilon_u,
\end{split}
\end{equation}
where $\epsilon_u$ represents the total cost associated with the action $a_{u,i}^t$ of requesting and caching a unit size of the video $i$ for 
$u$, and $c_i\in[0,1]$ denotes the normal size of the video $i$.

\subsection{\textbf{Utility of EC}}\label{sec: Utility of EC}

From the EC's perspective, the gain 
lies on 
accurately caching requested videos. The cost 
relates to the consumed caching capacity for video storage. 
Recall that $\bm{e}^t$ denotes the caching decisions of EC. Given request actions $\bm{a}^t$ of all UDs, the EC’s utility can be derived as follows:
\begin{equation}
\begin{split}
\label{Eq:U_E}
\mathrm{U}_{E}(\bm{e}^t, \bm{a}^t)=& \sum_{i}L_{E,i}(e_i^t, \bm{a}_i^t)-\beta_E\,C_{E,i}(e_i^t),\\
\text{s.t. }&\forall i\in \mathcal{I}:e_{i}^t\in[0,1],
\end{split}
\end{equation}
where $\beta_E$ is a positive tunable parameter to weigh the caching cost on the EC and $\bm{a}^{t}_i = [{a}^{t}_{u,i}]^{|\mathcal{U}|},\forall i,\forall t$.

The first term $L_{E,i}(e_i^t,\bm{a}_i^t)$ is the gain obtained by caching requested videos which should take the user's QoE into account. The EC is usually deployed by third parties such as ISPs and CDNs. The profit of the EC is directly related to the satisfaction of users served by the EC~\cite{Xu2020a, Xu2022}.  According to a prior study~\cite{Mao2017}, a user's QoE in online video services is a logarithmic function with respect to streaming bitrate.
Thus, we model $L_{E,i}(e_i^t,\bm{a}_i^t)$ as:
\begin{equation}
L_{E,i}\left(e_i^t,\bm{a}_i^t\right) = \sum_u a_{u,i}^t\log( 1+e_{i}^t\,c_i),
\end{equation}
where $c_{i}\in[0,1]$ is the normalized size of the video $i$ and $a_{u,i}^t\in\{0,1\}$ is the request action for 
$i$ made by 
$u$. This definition implies that the EC receives positive gain only if the video $i$ is cached, i.e., $e_i^t >0$, and 
$u$ requests 
the video $i$, i.e., $a_{u.i}^t=1$. 
The second term $C_{E,i}(e_i^t)$ is the caching cost of the EC.
Since the caching capacity of the EC is not free, we define the cost function $C_E(e_i^t)$ to cache the video $i$ as
\begin{equation}
C_{E,i}\left(e_{i}^t\right) =   e_{i}^t \, c_i\,\epsilon_E,
\label{EQ:C_{E}}
\end{equation}
where $\epsilon_E$ is the unit cost of edge caching. 

Our design objectives have two distinct goals. Firstly, our scheme should incentivize the EC (deployed by CPs at the edge network) to deliver high-quality caching services to UDs. Secondly, UDs should promptly determine an optimal redundant request strategy, striking a balance between users' QoE and privacy.
Given utility functions defined for UDs and the EC, our next step is to establish a game model in which UDs and the EC can 
maximize their utilities. 

\section{Joint Edge Caching and Requesting Problem with Stackelberg Game}\label{PROBLEM FORMULATION}
In reality, the objective of UDs to conceal view preferences contradicts the objective of the EC that makes efforts to infer user preferences for making correct caching decisions.  
From the definition of utility functions in the last section, we also observe that the redundant video requests launched by UDs will affect the caching efficiency of the EC. 
Therefore, it is proper to model the interactions between the EC and UDs as a game in which the EC and UDs maximize their respective utilities based on the optimal actions of each other.
\textcolor{black}{In this work, we employ the Stackelberg game to deduce the best actions for the EC and UDs~\cite{Xu2020a,Han2024IOTJa}.}
In this section, a static scenario in a particular time slot will be analyzed first before we discuss the dynamic scenario over multiple time slots. 

\subsection{Problem Formulation}
Considering the roles of the EC and UDs\footnote{Without loss of generality, a CP and the EC can be regarded as an entity with the same target.}, we adopt a one-leader (i.e., the EC) and multiple-followers (i.e., UDs) Stackelberg game to model the system,  which is defined as:
$
    \mathbb{G} := \{\left<E,\mathcal{U}\right>;\left<\bm{e},\bm{a}\right>;\left<\mathrm{U}_E,\mathrm{U}_{\forall u\in \mathcal{U}}\right>\}.
$
Here, $\bm{e}$ and $\bm{a}$ are the EC's and UDs' strategies, respectively. 
For a specific group of strategy $\left<\bm{e},\bm{a}\right>$, all participants can gain their payoff by utility functions, i.e., $\mathrm{U}_E$, or $\mathrm{U}_u$, correspondingly.

{\color{black}
Following the previous work in~\cite{Xu2020a}, we analyze the best responses for all participants in a static game scenario with complete information, where the EC and UDs have accurate knowledge of each other's utilities and public actions. 
This assumption is commonly applied in static game theory analysis~\cite{Xu2022,Xu2020a,Xu2019} and aligns with real-world requirements, as the EC must publish caching information to meet streaming service needs~\cite{Qian2020,Cui2020}. Additionally, verification methods~\cite{Xue2019,Cui2022,Xu2019} and attack techniques~\cite{Sivaraman2021,Acs2019} can be used to access this public information of UDs in edge caching environments.}
We will remove this assumption when introducing the algorithm design for a dynamic scenario.

In a static scenario of the game $\mathbb{G}$, the objectives of the EC and UDs are to maximize their utilities in each round, e.g., a time slot $t$, which can be achieved as follows. 
At the beginning of $t$, the EC makes an optimal caching strategy $\bm{e}^{t*}$ considering the best responses $\bm{a}^{t*}$'s of all UDs in this round.
Then, the UD $u$ acts as the follower and subsequently makes $\bm{a}_u^{t*}$ for video requests based on the optimal caching decision $\bm{e}^{t*}$ of the EC and other UDs' optimal strategies $\bm{a}_{-u}^{t*}$, independently and synchronously. 
In short, through static analysis of game $\mathbb{G}$, one can find the \textit{Stackelberg Equilibrium (SE)}, where none of the players have incentives to deviate from their strategies in each round of the game. 
It is equivalent to finding an optimal strategy group $\left<\bm{e}^{t*},\bm{a}_u^{t*}\right>$ at each time slot $t$, which guarantees that no participant can improve its own utility by unilaterally deviating from its current strategy, i.e.,
\begin{subequations}
\begin{align}
\textbf{Stage \uppercase\expandafter{\romannumeral1}: } &\mathrm{U}_{E}\left(\bm{e}^{t*}, \bm{a}^{t*}\right)  \geq \mathrm{U}_{E}\left(\bm{e}^t, \bm{a}^{t*}\right),\label{EQ:Stage I}\\
\textbf{Stage \uppercase\expandafter{\romannumeral2}: }&\mathrm{U}_{u}^t\left(\bm{e}^{t*}, \bm{a}_{u}^{t*},\bm{a}_{-u}^{t*}\right)  \geq \mathrm{U}_{u}^t\left(\bm{e}^{t*}, \bm{a}_{u}^t,\bm{a}_{-u}^{t*}\right),\label{EQ:Stage II}\forall u .
\end{align}\label{Eq:SE}
\end{subequations}
Based on the above two inequalities, we can deduce the equilibrium of the Stackelberg game, defined as:
\begin{definition}[Stackelberg Equilibrium]
\label{def:Stackelberg Equilibrium}
An optimal strategy group $\left<\bm{e}^{t*},\bm{a}_u^{t*}\right>$ constitutes a Stackelberg equilibrium of the joint caching game if inequalities in Eqs.~\eqref{Eq:SE} are satisfied.
\end{definition}

The next question is how to deduce the best caching decisions for each player, for which we can exploit the \textit{backward induction} method~\cite{Xu2020a,Xu2022}.

\subsection{The Static Game Analysis}
We first analyze the decision processes of each follower (i.e., each UD)  to derive the optimal strategy $\bm{a}_u^{t*}$ with any leader's strategy $\bm{e}^{t}$.
Then, we analyze the leader's (i.e., the EC's) optimal caching strategy $\bm{e}^{t*}$ based on UDs' best responses $\bm{a}_u^{t*}$'s to any $\bm{e}^{t}$. 

\subsubsection{\textbf{Analysis of Followers (UDs)}}\label{sec:Analysis of Followers}
Following the \textit{Backward Induction}
method, we 
first study the best UDs' strategies
under the leader's (i.e., EC's) strategy $\bm{e}^t$. 
As we see from utility functions in Eqs.~\eqref{Eq:U_u^t}-\eqref{EQ:C_u,i}, all videos are independent of each other when computing utilities (i.e., $\frac{\partial^2 \mathrm{U}_{u}^{t}}{\partial a_{u,i}^{t}a_{u,j}^{t}}\equiv0\ , \forall i\neq j$). 
Therefore, we analyze the optimal caching strategy $\bm{a}_u^{t*}$ for each video $i$ independently. 
For simplicity, we relax the constraint $y^{t}_{u,i}\in\{0,1\}$ to $y^{t}_{u,i}\in[0,1]$ so as to make utility functions differentiable with respect to $y^{t}_{u,i}$ in our analysis. The final decision will be made by rounding off $y^{t}_{u,i}$.
We analyze the existence and uniqueness of the \textit{Nash Equilibrium (NE)} strategies for UDs at Stage \uppercase\expandafter{\romannumeral2} with given $e_i^t$.

\begin{theorem}[Existence]\label{theorem:eds exist}
In Stage \uppercase\expandafter{\romannumeral2}, any UD $u$ playing as a follower has the best strategy $a_{u,i}^{t*}$ on the video $i$ at the time slot $t$ with
\begin{subnumcases}{a_{u,i}^{t*}=\label{EQ:BestUDAct}}
1, &$x_{u,i}^t=1$; \label{EQ:BestUDActa}\\
\lfloor y_{u,i}^{t*} + 1/2 \rfloor\, &$x_{u,i}^t=0$,
\label{EQ:BestUDActb}
\end{subnumcases}
where $y_{u,i}^{t*}$ is:
\begin{subnumcases}{y_{u,i}^{t*}=\label{Eq:response function case}}
\max\{0,\min\{\Omega(\Delta n_i^{t*}),1\}\}, &$r_{u,i}^t=0$; \label{Eq:response function case a}\\
\argmax_{ y_{u,i}^{t}\in\{0,1\}}\mathrm{U}_{u,i}^t(e_i^t,y_{u,i}^{t},\bm{a}_{-u,i}^{t*}), &$r_{u,i}^t=1$,
\label{Eq:response function case b}
\end{subnumcases}
and 
\begin{equation}
\begin{split}
\Omega(\Delta n_i^{t*})=\frac{\gamma}{\beta\,c_i\,\epsilon_u-c_i\,d_{u,i}^t\,\hat{p}_{i}^t(1-e_i^t)}+\frac{n_i^{t}-\Delta n_i^{t*}}{N^{t*}_i}.
\label{Eq:response function B}
\end{split}
\end{equation}
Here $\Delta n_i^{t*} = \sum_{\forall u': r_{u',i}^t =0}a_{u',i}^{t*}$ and $N^{t*}_i= n_i^t - m_i^t- 2\Delta n_i^{t*}$. Additionally, $d_{u,i}^t$ is the view preference, $\hat{p}_{i}^t$ is the video popularity, $\epsilon_u$ is the unit caching cost, and $c_{i}$ is the video size.
\end{theorem}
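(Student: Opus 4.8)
The plan is to exploit the separability of $\mathrm{U}_u^t$ across contents already recorded before the statement, namely $\frac{\partial^2 \mathrm{U}_{u}^{t}}{\partial a_{u,i}^{t}\,\partial a_{u,j}^{t}}\equiv 0$ for $i\neq j$, so that the follower's problem decouples into one scalar optimization of $\mathrm{U}_{u,i}^t(e_i^t,a_{u,i}^t,\bm{a}_{-u,i}^t)$ per content $i$. I would then branch on the two pieces of information that fix the structure of $\mathrm{U}_{u,i}^t$: the genuine request $x_{u,i}^t$ and the public history $r_{u,i}^t$. The case $x_{u,i}^t=1$ is immediate: since $a_{u,i}^t=x_{u,i}^t\,|\,y_{u,i}^t$, the public action is forced to $1$, and such contents are in any case excluded from the summation in Eq.~\eqref{Eq:U_u^t}, so no optimization is required, giving Eq.~\eqref{EQ:BestUDActa}. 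For $x_{u,i}^t=0$ we have $a_{u,i}^t=y_{u,i}^t$, so I would relax $y_{u,i}^t\in\{0,1\}$ to $[0,1]$ as the excerpt permits, optimize over the interval, and round at the end to obtain Eq.~\eqref{EQ:BestUDActb}.

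The heart of the argument is the subcase $x_{u,i}^t=0,\ r_{u,i}^t=0$. Here $\widetilde{r}_{u,i}^t=1$, so only the second term of $D_{u,i}^t$ in Eq.~\eqref{EQ:D_u,i} survives; substituting $a_{u,i}^t=y_{u,i}^t$ and collecting terms rewrites the privacy term as $\log n_i^t-\log\!\big[(n_i^t-\Delta n_i^t)-y_{u,i}^t N_i^{t}\big]$, where $N_i^{t}=n_i^t-m_i^t-2\Delta n_i^t$ and $\Delta n_i^t$ is treated as the aggregate of the \emph{other} followers' first-time requests. Differentiating $\mathrm{U}_{u,i}^t=L_{u,i}^t-\gamma D_{u,i}^t-\beta C_{u,i}$ twice in $y_{u,i}^t$ yields a second derivative proportional to $-\gamma (N_i^t)^2/[\,\cdot\,]^2$, which is strictly negative once I verify $N_i^t>0$; writing $n_i^t=K-m_i^t$ turns this into $K>2(\Delta n_i^t+m_i^t)$, exactly the standing assumption, which also keeps the logarithm's argument positive throughout $[0,1]$. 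Strict concavity then makes the first-order condition both necessary and sufficient: solving it for $y_{u,i}^t$ produces precisely $\Omega(\Delta n_i^{t*})$ of Eq.~\eqref{Eq:response function B} (the two fractions there being the rearranged stationary point), and projecting the unconstrained optimizer onto $[0,1]$ gives the clipped form $\max\{0,\min\{\Omega(\Delta n_i^{t*}),1\}\}$ of Eq.~\eqref{Eq:response function case a}.

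For the remaining subcase $x_{u,i}^t=0,\ r_{u,i}^t=1$ I would note that $\widetilde{r}_{u,i}^t=0$, so the surviving privacy term $r_{u,i}^t\log\frac{m_i^t}{m_i^t+\Delta n_i^t}$ does not depend on $u$'s own action; since $L_{u,i}^t$ and $C_{u,i}$ are both linear in $a_{u,i}^t=y_{u,i}^t$, the objective is affine in $y_{u,i}^t$ and its maximum over $[0,1]$ is attained at an endpoint, which is the discrete $\argmax_{y\in\{0,1\}}$ of Eq.~\eqref{Eq:response function case b}; rounding via $\lfloor y_{u,i}^{t*}+1/2\rfloor$ recovers the integral action. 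Existence of a maximizer in every branch is immediate since each $\mathrm{U}_{u,i}^t$ is continuous on the compact set $[0,1]$; to upgrade this to existence of the follower equilibrium I would close the loop on the coupling variable, observing that only $r_{u',i}^t=0$ users contribute to $\Delta n_i^{t*}$ and that their best responses depend continuously on $\Delta n_i^t$ through $\Omega$, so the aggregate best-response map is a continuous self-map of a compact interval and Brouwer's theorem furnishes a consistent $\Delta n_i^{t*}$.

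The step I expect to be most delicate is the bookkeeping of $\Delta n_i^t$: it appears simultaneously inside $N_i^t$, inside the logarithm, and as the aggregate each follower takes as given, so I must treat it as fixed (a mean-field aggregate of the other followers) when differentiating, yet as the variable to be equilibrated when invoking the fixed-point argument. Getting the sign of $N_i^t$ right under $K>2(\Delta n_i^t+m_i^t)$ is what simultaneously guarantees concavity, well-definedness of the logarithm on $[0,1]$, and continuity of $\Omega$; the discontinuous $r_{u,i}^t=1$ responses are harmless only because such users drop out of $\Delta n_i^{t*}$, a point I would state explicitly.
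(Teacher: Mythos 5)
Your proposal is correct and follows essentially the same route as the paper's proof: relax $y_{u,i}^t$ to $[0,1]$, handle $x_{u,i}^t=1$ trivially, split on $r_{u,i}^t$, use strict concavity (second derivative proportional to $-\gamma\,(N_i^t)^2/[\,\cdot\,]^2<0$) plus the first-order condition to get $\Omega$ clipped to $[0,1]$ when $r_{u,i}^t=0$, and observe that the objective is affine in $y_{u,i}^t$ when $r_{u,i}^t=1$, so the optimum sits at an endpoint of $\{0,1\}$. You do go beyond the paper in two places. First, you explicitly verify $N_i^t>0$ and the positivity of the logarithm's argument on all of $[0,1]$ from the standing assumption $K>2(\Delta n_i^t+m_i^t)$; the paper leaves this implicit, though it is exactly what justifies its concavity claim. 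Second, you close the mean-field loop on $\Delta n_i^{t*}$ with a Brouwer fixed-point argument, a step the paper omits entirely: the paper's Theorem 1 only derives best responses given $\Delta n_i^{t*}$ and defers consistency of the aggregate to the standard-function uniqueness argument of Theorem 2. Your Brouwer step is sound for the relaxed game, since the clipped $\Omega$ is continuous in the aggregate, but note (as you partly do) that after rounding to $\{0,1\}$ the best responses are no longer continuous, so the fixed point you produce lives at the level of the relaxed responses --- which is the same level of rigor at which the paper itself operates.
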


Eq.~\eqref{EQ:BestUDActb} represents redundant requests distorting exposed privacy. 
Intuitively speaking, our proof is based on two cases. If $r_{u,i}^t=1$, it represents that the video $i$ has been requested by the UD $u$ before, and thus the utility to request it again for the UD $u$ is independent of the strategies of other UDs. In contrast, if $r_{u,i}^t=0$, 
$u$ needs to assess the utility of both requesting 
and 
not requesting this particular video $i$ separately to see which one is higher. 
We provide the detailed proof of Theorem~\ref{theorem:eds exist} in Appendix \ref{appendix:eds exist}. 

Next, we analyze the uniqueness of NE in Theorem \ref{theorem:eds unique}.
\begin{theorem}[Uniqueness]\label{theorem:eds unique}
Each UD has a unique optimal strategy $a_{u,i}^{t*}$ given the caching strategy of the EC (leader) $e_i^t$ and optimal strategies $\bm{a}_{-u,i}^{t*}$ of other UDs.
\end{theorem}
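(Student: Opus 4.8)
The plan is to prove uniqueness by first reducing the followers' game to a collection of independent per-video subgames and then showing, for each video $i$, that every UD's best response is single-valued and that the resulting equilibrium aggregate is unique. The reduction is immediate: since $\partial^2\mathrm{U}_u^t/\partial a_{u,i}^t\partial a_{u,j}^t\equiv0$ for $i\neq j$, fixing the leader's strategy $e_i^t$ decouples the followers' optimization across contents, so it suffices to treat a single $i$. I would then partition the UDs by the pair $(x_{u,i}^t,r_{u,i}^t)$. For $x_{u,i}^t=1$ the action is forced to $a_{u,i}^{t*}=1$ by Eq.~\eqref{EQ:BestUDActa}, so uniqueness is trivial. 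For $x_{u,i}^t=0$ and $r_{u,i}^t=1$, the user's action neither enters $\Delta n_i^t$ (it is not a first-time requester) nor changes its privacy term $r_{u,i}^t\log\frac{m_i^t}{m_i^t+\Delta n_i^t}$, so its objective collapses to a standalone comparison of two values over $\{0,1\}$; I would show that the caching benefit minus the caching cost strictly orders the two options, making the $\argmax$ in Eq.~\eqref{Eq:response function case b} a singleton.

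The heart of the argument is the group with $x_{u,i}^t=0$ and $r_{u,i}^t=0$, the only players whose actions couple through the scalar aggregate $\Delta n_i^t$. Working in the relaxation $y_{u,i}^t\in[0,1]$, I would establish that $\mathrm{U}_{u,i}^t$ is strictly concave in $y_{u,i}^t$. In Eq.~\eqref{Eq:U_u^t} the caching benefit $L_{u,i}^t$ and the cost $C_{u,i}$ are linear in $y_{u,i}^t$, so the only nonlinearity is $-\gamma D_{u,i}^t$, which after substituting the own contribution into $\Delta n_i^t$ becomes $-\gamma\log$ of a quadratic in $y_{u,i}^t$. I would compute the second derivative of this term and show it is negative throughout $[0,1]$; this is precisely where the standing assumption $K>2(\Delta n_i^t+m_i^t)$ is used, since it guarantees positivity of the relevant denominators, in particular $N_i^t=n_i^t-m_i^t-2\Delta n_i^t>0$, which controls the sign. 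Strict concavity on the compact convex interval then gives a unique maximizer, namely the interior critical point $\Omega(\cdot)$ of Eq.~\eqref{Eq:response function B} clamped to $[0,1]$ as in Eq.~\eqref{Eq:response function case a}; rounding to $\{0,1\}$ preserves uniqueness away from the measure-zero tie locus, which I would dispose of under a tie-breaking convention.

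To upgrade single-valued best responses into a unique equilibrium, I would exploit the aggregative structure: every best response depends on the others only through $\Delta n_i^t$, so a consistent equilibrium is a fixed point of the scalar self-map $\Phi(\Delta n_i^t)=\sum_{u':r_{u',i}^t=0}\lfloor y_{u',i}^{t*}(\Delta n_i^t)+1/2\rfloor$ plus the fixed contribution of the $x_{u,i}^t=1$ users. I would show $\Phi$ is monotone in $\Delta n_i^t$ and that $\Delta n_i^t\mapsto\Delta n_i^t-\Phi(\Delta n_i^t)$ is strictly increasing, so the fixed point is unique; here each $\Omega_u$ is increasing in $\Delta n_i^t$ with slope $(n_i^t+m_i^t)/(N_i^t)^2$, and the $K$-constraint bounds the total slope below one. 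The unique equilibrium aggregate then determines every individual action uniquely.

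I expect the main obstacle to be establishing strict concavity of the privacy term for the $r_{u,i}^t=0$ players, because $-\gamma D_{u,i}^t$ is the logarithm of a convex quadratic and its concavity is not automatic; pinning down the sign of its second derivative over the whole interval, and verifying that $K>2(\Delta n_i^t+m_i^t)$ is exactly what is needed, is the delicate computation. A secondary difficulty is the discontinuity introduced by the rounding operator $\lfloor\,\cdot+1/2\rfloor$, which turns the aggregate map into a step function and forces the monotonicity argument, rather than a naive contraction, to conclude a unique fixed point.
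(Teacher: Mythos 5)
Your reduction across videos, the case split on $(x_{u,i}^t,r_{u,i}^t)$, and the concavity-based single-valuedness of each best response all track the paper (the concavity computation appears in the paper's proof of Theorem~\ref{theorem:eds exist}, and it is less delicate than you fear: the paper differentiates the privacy term holding the aggregate $\Delta n_i^t$ fixed, so that term is the logarithm of a function \emph{linear} in $y_{u,i}^t$, and the second derivative is $-\gamma\,(1-r_{u,i}^t)\left[N_i^t/\bigl(y_{u,i}^t N_i^t-(n_i^t-\Delta n_i^t)\bigr)\right]^2\le 0$, a negative square, so no convex-quadratic issue arises).

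The genuine gap is in your final step, which is also the crux of the theorem as the paper reads it: uniqueness of the followers' Nash equilibrium, not merely single-valuedness of each best response. You want $\Delta n_i^t\mapsto \Delta n_i^t-\Phi(\Delta n_i^t)$ to be strictly increasing, and you justify this by asserting that each $\Omega_u$ has slope $(n_i^t+m_i^t)/(N_i^{t})^2$ ``bounded below one by the $K$-constraint.'' That bound does not follow. Since $n_i^t=K-m_i^t$, the slope equals $K/(K-2m_i^t-2\Delta n_i^t)^2$, and the standing assumption $K>2(m_i^t+\Delta n_i^t)$ only makes the denominator positive: whenever $K-2(m_i^t+\Delta n_i^t)<\sqrt{K}$ the per-user slope already exceeds one, and your $\Phi$ sums such terms over all users with $r_{u',i}^t=0$, so the aggregate slope can be far larger. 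Note also that best responses here are \emph{increasing} in the aggregate (strategic complements), and monotone-increasing aggregate maps of this kind generically admit multiple fixed points, so no argument based on monotonicity alone can close this. The paper closes exactly this hole with a different ingredient: it shows the best-response map is a \emph{standard function} in the sense of Yates~\cite{Yates1995} --- positivity, monotonicity, and crucially \emph{scalability}, i.e., $k\,\Omega(\Delta n_i^{t*})\ge \Omega(k\,\Delta n_i^{t*})$ for all $k\ge 1$, verified by direct computation in Appendix~B --- and then invokes Yates' fixed-point theorem, which delivers uniqueness with no slope bound at all. To repair your proof you would need to establish scalability (or a genuine contraction property) of $\Omega$; the slope claim you rely on is not provable from the model's assumptions.
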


\begin{proof}
Given that Theorem \ref{theorem:eds exist} has guaranteed the existence of the NE, we prove that the NE is unique by proving that the best response functions $y_{u,i}^{t*}$ in Eq.~\eqref{Eq:response function case} are standard functions~\cite{Yates1995}, which are defined as follows:
\begin{definition}[Standard Function~\cite{Yates1995}]\label{definition:standard function}
A function $f(\bm{x})$ is standard if the following properties hold for $\forall\bm{x}\geq0$:
\begin{itemize}
\item Positivity: $f(\bm{x})\geq0$
\item Monotonicity: If $\bm{x}\geq\bm{x}'$, then $f(\bm{x})\geq f(\bm{x}')$
\item Scalability: $\forall k \geq 1,\ k\,f(\bm{x})\geq f(k\,\bm{x})$
\end{itemize}
\end{definition}
It is a sufficient condition for the uniqueness of the NE according to~\cite{Yates1995}.  
Based on Theorem~\ref{theorem:eds exist}, the best response functions $y_{u,i}^{t*}$ in Eq.~\eqref{Eq:response function case b} when $r_{u,i}^t=1$ is naturally a standard function since the result of $\argmax$ operation is independent on the best strategies $\bm{a}_{-u,i}^{t*}$ of other UDs according to Eqs.~\eqref{Eq:U_u^t}-\eqref{EQ:C_u,i}.
In other words, given the value of $e_i^t$ and $y_{u,i}^{t}$, the function in Eq.~\eqref{Eq:response function case b} is a non-negative constant function corresponding to the best strategies $\bm{a}_{-u,i}^{t*}$ and guarantees the properties of the standard function, spontaneously. For the case $r_{u,i}^t=0$, we demonstrate that the best response function $y_{u,i}^{t*}$ in Eq.~\eqref{Eq:response function case a} satisfies the three properties of a standard function in Appendix \ref{appendix: eds unique}.
Therefore, UDs' best response functions defined in Eq.~\eqref{Eq:response function case} guarantee the uniqueness of the NE in Stage \uppercase\expandafter{\romannumeral2} in the game $\mathbb{G}$ with respect to any video $i$ and the proof of Theorem \ref{theorem:eds unique} is completed. 
\end{proof}

With the above analysis, the unique \textit{Nash Equilibrium} exists in Stage  \uppercase\expandafter{\romannumeral2} of game $\mathbb{G}$ among all UDs reacting to the EC's caching strategy $e_i^t$ on any video $i$. Recall that all strategies for different videos are independent. Hence, there exists a unique combined strategy $$\mathrm{U}_{u}^t\left(\bm{e}^{t}, \bm{a}_{u}^{t*},\bm{a}_{-u}^{t*}\right) \geq \mathrm{U}_{u}^t\left(\bm{e}^{t}, \bm{a}_{u}^t,\bm{a}_{-u}^{t*}\right),\forall t~,~\forall u,$$ where $\bm{e}^{t}$, $\bm{a}_u^{t*}$ and $\bm{a}_{-u}^{t*}$ are an arbitrary caching strategy of the EC, the best video requesting strategies of a specific UD $u$ and other UDs, respectively.

\subsubsection{\textbf{Analysis of Leader (EC)}}
Next, we analyze the optimal strategy for the EC in Stage \uppercase\expandafter{\romannumeral1} with the best response $\bm{a}^{t*}$ obtained in 
Sec.~\ref{sec:Analysis of Followers}.
\begin{theorem}
\label{theorem:EC}
Given the optimal request  strategies $\bm{a}^{t*}$ of all UDs, the unique optimal strategy $e_{i}^{t*}$ on any video $i$ for EC is determined by:
\begin{subnumcases}{e_{i}^{t*}=\label{Eq:EC response function}}
0,&$\sum_{\forall u} a_{u,i}^{t*}< \uptheta$, \\
\frac{\sum_{\forall u}a_{u,i}^{t*}-\beta_E\,\epsilon_E}{\beta_E\,\epsilon_E\,c_i},&
 $\uptheta\leq\sum_{\forall u} a_{u,i}^{t*}\leq \Theta_i$,\\
1,&$\sum_{\forall u} a_{u,i}^{t*}> \Theta_i$,
\end{subnumcases}
where $\uptheta=\beta_E\,\epsilon_E$ and $\Theta_i=\beta_E\,\epsilon_E\,(1+c_i)$. 
Here, $\beta_E>0$ is the weight of the caching cost on the EC and $\epsilon_E$ is the unit cost for the EC caching defined in Eq.~\eqref{EQ:C_{E}}.
\end{theorem}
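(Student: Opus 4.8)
The plan is to exploit the separability of the EC's utility across videos and then solve a one-dimensional concave maximization with a box constraint. First I would observe that in $\mathrm{U}_E(\bm{e}^t,\bm{a}^t)$ of Eq.~\eqref{Eq:U_E} the decision variable $e_i^t$ appears only in the $i$-th summand, since both $L_{E,i}$ and $C_{E,i}$ depend on $\bm{e}^t$ solely through $e_i^t$. Hence $\mathrm{U}_E$ decomposes into $|\mathcal{I}|$ independent per-video subproblems, and it suffices to maximize, for each fixed $i$,
\[
f_i(e) := \Big(\textstyle\sum_{\forall u} a_{u,i}^{t*}\Big)\log(1+e\,c_i) - \beta_E\,\epsilon_E\,c_i\,e
\]
over $e\in[0,1]$, treating $A_i := \sum_{\forall u} a_{u,i}^{t*}$ as a given nonnegative constant supplied by the followers' equilibrium responses $\bm{a}_i^{t*}$.

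Next I would establish strict concavity. Differentiating gives $f_i'(e) = c_i\big(\tfrac{A_i}{1+e\,c_i} - \beta_E\,\epsilon_E\big)$ and $f_i''(e) = -\tfrac{A_i\,c_i^2}{(1+e\,c_i)^2}\le 0$, with strict inequality whenever $A_i>0$ and $c_i>0$. Strict concavity of $f_i$ on the compact convex interval $[0,1]$ immediately yields both existence and uniqueness of the maximizer, which settles the uniqueness claim of the theorem without further work.

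It remains to locate the maximizer, which I would obtain by finding the unconstrained stationary point and projecting it onto $[0,1]$. Setting $f_i'(e)=0$ gives $1+e\,c_i = A_i/(\beta_E\,\epsilon_E)$, i.e.\ the stationary point $\hat e = (A_i-\beta_E\,\epsilon_E)/(\beta_E\,\epsilon_E\,c_i)$. Because $f_i$ is concave, the constrained optimum is the clamp of $\hat e$ to $[0,1]$: one checks $\hat e\ge 0 \iff A_i\ge \beta_E\,\epsilon_E = \uptheta$ and $\hat e\le 1 \iff A_i\le \beta_E\,\epsilon_E(1+c_i)=\Theta_i$. This produces exactly the three regimes of Eq.~\eqref{Eq:EC response function}: $e_i^{t*}=0$ when $A_i<\uptheta$ (the objective is decreasing on $[0,1]$), $e_i^{t*}=\hat e$ when $\uptheta\le A_i\le\Theta_i$, and $e_i^{t*}=1$ when $A_i>\Theta_i$ (the objective is increasing on $[0,1]$).

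I do not expect a genuine obstacle here: once separability and concavity are in place, the result is a textbook projected-stationary-point computation. The only point requiring mild care is the boundary bookkeeping, namely verifying that the thresholds $\uptheta$ and $\Theta_i$ align exactly with the sign conditions on $f_i'$ at the endpoints $e=0$ and $e=1$, and implicitly that $A_i$ is treated as fixed (the followers' equilibrium response having already been pinned down in Stage~\uppercase\expandafter{\romannumeral2} via Theorems~\ref{theorem:eds exist} and \ref{theorem:eds unique}), so that the leader's problem is genuinely one-dimensional.
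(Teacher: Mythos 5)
Your proposal is correct and takes essentially the same route as the paper's proof: per-video separability, the first and second derivatives of the EC utility in $e_i^t$, strict concavity, and the stationary point clamped to $[0,1]$ with the thresholds $\uptheta$ and $\Theta_i$ coming from the sign of the derivative at the endpoints. The only cosmetic difference is that the paper handles $\sum_{\forall u} a_{u,i}^{t*}=0$ (where strict concavity fails) as a separate strictly-decreasing case, whereas you absorb it into the regime $A_i<\uptheta$ via monotonicity of the objective---both dispositions are valid.
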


\begin{proof}
We prove Theorem \ref{theorem:EC} in Appendix \ref{appendix of theorem EC}.
\end{proof}

Thus, it can be proven that there exists a unique
$\mathrm{U}_{E}\left(\bm{e}^{t*}, \bm{a}^{t*}\right)  \geq \mathrm{U}_{E}\left(\bm{e}^t, \bm{a}^{t*}\right)$, where $\bm{e}^{t*}$ 
and $\bm{a}^{t*}$ are the optimal caching strategy on the EC and the optimal request strategies of all UDs, respectively.  
With above analysis, 
we can finally conclude that the unique Stackelberg equilibrium exists in the static game $\mathbb{G}$ based on the optimal strategy group $\left<\bm{e}^{t*},\bm{a}^{t*}\right>$ following 
Theorem \ref{theorem:eds exist}-\ref{theorem:EC}.

\begin{algorithm}[!t]
\caption{EVC: Edge Video Caching Algorithm} 
\label{algorithm:Edge caching algorithm}
\KwIn{$\mathcal{I},\mathcal{U}$}
\KwOut{$\bm{e}^{*}$}
\For{$\forall t\in\mathcal{T}$}{
\For(\tcc*[f]{\textbf{STAGE \uppercase\expandafter{\romannumeral1}}}){$\forall i \in \mathcal{I}$\label{line: stage 1 begin}}{
Collect public strategy $\bm{a}_{i}^{t-1*}$ from all UDs;\\
Evaluate $\hat{a}_{i}^{t}$ by $\bm{a}_{i}^{t-1*}$ based on Eq~\eqref{EQ:hat_a};\label{Line:EVC Evaluate}\\
Obtain the optimal caching strategy $e^{t*}_{i}$ by $\hat{a}_{i}^{t}$ based on Theorem \ref{theorem:EC} \label{Line:EVC strategy};
}
Announce the optimal caching strategy $\bm{e}^{t*}$.
}
\end{algorithm}

\begin{algorithm}[!t]
\caption{cRVR: cache-friendly redundant video requesting algorithm on any UD $u$ }
\label{algorithm:RVR algorithm}
\KwIn{$\mathcal{I}$, $\mathcal{U}$, $\forall t :\bm{x}_u$}
\KwOut{$\bm{a}_u^{*}$}
\For{$\forall t \in \mathcal{T}$}{
Get the public strategy $\bm{a}^{t-1*}$ from other UDs;\label{Line:update n begin}\\
\For{$\forall i \in \mathcal{I}$}{
Update $\Delta n^{t-1}_i$ by $\bm{a}^{t-1*}_u,\forall u$ and  UDs' public profile $\bm{r}^{t-1}_u,\forall u$;\\
Evaluate $\Delta \hat{n}_i^t$ by $\Delta n_i^{t-1}$ and $\Delta \hat{n}_i^{t-1}$ based on Eq.~\eqref{EQ:hat_deltan};\label{Line:update n end}\\}
Update $\bm{w}_u^t$ by the UD $u$ genuine request $\bm{x}_u^t$;\label{Line:cRVR optimal begin}\\
    \If(\tcc*[f]{\textbf{STAGE \uppercase\expandafter{\romannumeral2}}}){Local cache miss}{
        Get the public strategy $\bm{e}^{t*}$ from EC;\\
        
        \For{$\forall i \in \mathcal{I}$}{
            Update $d_{u,i}^t$, $\hat{p}_{i}^t$ by $\bm{w}_u^t$, $\bm{a}^{t-1*}_i$, respectively;\\
            Obtain the cRVR strategy $a_{u,i}^{t*}$ by $\Delta\hat{n}_i^t$, $d_{u,i}^t$, $\hat{p}_{i}^t$, $e^{t*}_i$ based on Theorem \ref{theorem:eds exist};\\
        }
        Fetch and cache videos from the EC or the CP with the redundant requests $\bm{a}_{u}^{t*}$;\\
        }
    \lElse{$\bm{a}_{u}^{t*}\leftarrow \bm{a}_{u}^{t-1*}$ and fetch the videos from local cache with $\bm{x}_u^t$ directly\label{Line:cRVR optimal end}}
    }
\end{algorithm}

\subsection{Algorithm Design in Dynamic Games}
{
We proceed to discuss how to design the \emph{edge video caching (called EVC)} algorithm for the EC and the \emph{cache-friendly redundant video requesting (called cRVR)} algorithm for UDs in a dynamic game $\mathbb{G}$ over multiple time slots.}  The details of EVC and cRVR are presented in Alg. \ref{algorithm:Edge caching algorithm} and Alg. \ref{algorithm:RVR algorithm}, respectively.

{The challenges for the design of EVC and cRVR algorithms can be illustrated from two aspects.} \textit{First}, the action information adopted by other players is incomplete to a particular player. For example,  $\bm{x}_u^t$, $\bm{d}_u^t$, and $\epsilon_u$ are the private information of 
$u$, which is invisible to the EC and other UDs, which however is useful for other UDs to make optimal actions.
\textit{Second}, in a large-scale online video system, user requests are generated at a high rate requiring that caching algorithms make decisions instantly so that user requests can be satisfied in time. 
Given these two challenges, the design principle of our algorithms is to make caching decisions with the best estimation of incomplete information in an efficient way.

In each slot, the EC as the leader makes video caching decisions first. According to Theorem~\ref{theorem:EC}, the EC needs the knowledge of $a_{u,i}^{t*}, \forall u$ in Eq.~\eqref{EQ:BestUDAct}-\eqref{Eq:response function B} before making the best caching decisions.
However, such information is unavailable before UDs as followers take their actions. 
A feasible solution is to estimate $\sum_{\forall u}a_{u,i}^{t*}$ based on historical records shown in Line \ref{Line:EVC Evaluate} (Alg.~\ref{algorithm:Edge caching algorithm}).
A straightforward method is to estimate $\sum_{\forall u}a_{u,i}^{t*}$ using   $\sum_{\forall u}a_{u,i}^{(t-1)*}$. 
To make our estimation robust and resilient to noises, we propose to estimate $\sum_{\forall u}a_{u,i}^{t*}$ using the moving average value (MAV) of historical records, which is expressed as follows: 
\begin{equation}
     \hat{a}_{i}^{t} =   (1-\rho)\sum_{u}a_{u,i}^{t-1*}+\rho\,\hat{a}_{i}^{t-1}.
    \label{EQ:hat_a}
\end{equation}
Here, $\rho$ is a hyper-parameter to tune the weight of historical records at each time slot and the initial value $\hat{a}_{i}^{1} = 0$. 
The computation complexity to evaluate $\hat{a}_{i}^{t}$ is $O(|\mathcal{U}|)$, where $|\mathcal{U}|$ is the total number of users served by the EC. With the evaluated $\hat{a}_{i}^{t}$, the EC can determine 
an 
optimal caching strategy for each video with $O(1)$ computation complexity following Theorem \ref{theorem:EC}.
The overall time complexity is $O(|\mathcal{U}|\cdot|\mathcal{I}|)$ for making edge caching decisions of all videos, where $\mathcal{|I|}$ is the video population.

UDs as the followers can make their best responses based on $\bm{e}_t^*$ announced by the EC.
For individual UDs to make request decisions, it is difficult to obtain the value of $\Delta n_i^t = \sum_{\forall u: r_{u,i}^t =0}a_{u,i}^t,\forall i$,  which is dependent on the decisions of all UDs. 
Similarly, this number can be well estimated from historical records by: 
\begin{equation}
    \Delta \hat{n}_i^t =  (1-\rho)\Delta n_i^{t-1}+\rho\Delta \hat{n}_{i}^{t-1},
    \label{EQ:hat_deltan}
\end{equation} 
where the initial value $\Delta \hat{n}_{i}^{1} = 0$. The details of estimation are described in Line \ref{Line:update n begin}-\ref{Line:update n end} (Alg. \ref{algorithm:RVR algorithm}). 
The computation complexity is $O(|\mathcal{U}|)$ when evaluating  $\Delta \hat{n}_i^{t}$ for the video $i$ with UDs' public request actions $\bm{a}_u^{t-1*}, \forall u$, public profile $\bm{r}^{t-1}_u, \forall u$, and the estimated value $\Delta \hat{n}_{i}^{t-1}$ in last time slot $t-1$. 
When a new genuine request $\bm{x}_u^t$ arrives, 
$u$ can make the optimal requesting strategies of each video in $O(1)$ time complexity following Theorem \ref{theorem:eds exist} with the evaluated value $\Delta \hat{n}_i^{t}$. The overall time complexity is $O(|\mathcal{I}|)$ for making redundant requests of all videos, where $\mathcal{|I|}$ is the video population. The detailed process to generate redundant requests is described by Line \ref{Line:cRVR optimal begin}-\ref{Line:cRVR optimal end} (Alg.~\ref{algorithm:RVR algorithm}).

\vspace{-2mm}
\section{Experiments}\label{Experiment}

In this section, we compare the proposed cRVR with other algorithms to demonstrate that cRVR can significantly reduce privacy disclosure without affecting the caching performance on the EC. 
We set up an edge cache that is equipped with a certain caching capacity to provide video streaming services for UDs. UDs' requests that cannot be served by the EC will be directed to the content provider. 

\subsection{Datasets and Experimental Settings}

\subsubsection{\textbf{Dataset Settings}}
We mainly 
conduct 
experiments on two datasets to verify the performance of our approach:
(1) \textbf{\textit{Tencent (TC) Dataset}.}
Due to the large size of Tencent Video\footnote{https://v.qq.com/} Dataset in~\cite{zhang2022}, we only sample requests from $1,000$ UDs in a particular city over 30 days. There are $20,999$ unique videos with more than $222K$ requests in total.
Each request record in our dataset is represented by the metadata $<$UserID, VideoID, TIME, VideoType, Size$>$. 
We split the entire dataset into two subsets based on the date.  The first subset is used to initialize the online video system containing requests in the first 12 days. The second subset contains requests for the rest of 18 days, which is used for testing. 
(2) \textbf{\textit{MovieLens Dataset},} which is generated with the MovieLens 25M (ML-25M) Dataset~\cite{movieLens}, 
a public video rating dataset. We use the user-video rating time as the time when the video is played by 
a 
user. Due to the extreme sparsity of the rating time of the MovieLens Dataset, the time span of the MovieLens Dataset is scaled down to 30 days. We also sample 1,000 users from ML-25M  for experiments to test the performance of our algorithm. There are a total of $158K$ viewing records in the sampled dataset related to 10,002 videos. 
This dataset is also split into two subsets based on the date, the same as the first dataset.
\textcolor{black}{We 
provide 
the 
detailed information 
of 
these two datasets in Appendix~\ref{appendix: dataset}.}

\subsubsection{\textbf{Simulation Settings}}
{\color{black}
In our experiments, all strategies are programmed in Python and tested on an AMD server with four EPYC™ 75F3 @ 2.95 GHz processors with 502 GB memory. 
The EC service is implemented in the host process, while each UD strategy is executed in an independent sub-process. Communications between the host process and the sub-processes are handled via a socket protocol.
To simulate the limited capacity of personal devices, the cache capacity for UDs is set to 0.5\% of the total number of videos in each dataset. A utility-based caching strategy~\cite{zhang2022} is used as the default eviction method for video replacement when the local cache of UDs is full.}
Each time slot in both 
datasets is set as $10$ minutes. The hyper-parameter $\rho$ of the MAV method in the dynamic game is set as $0.9$.
Besides, both the caching cost $\epsilon_E$ for the EC and $\epsilon_u$ for all UDs are fixed as $1$ for convenience.
We set the decay parameter $\delta = 0.01$ in $\hat{p}_i^t$ according to~\cite{Shi2021}. 
Experiments are also conducted to study the influence of the parameters $\gamma$, $\beta$, and $\beta_{E}$ in the utility function.

\subsection{\textbf{Baselines}} 

To evaluate the performance, we compare different combinations of video requesting 
and caching algorithms. For the video requesting algorithms, we compare our approach with the following:

\begin{itemize} 
\item \emph{\textbf{NR} (No Redundant Requesting)}, which does not request any additional video and is used as the benchmark.
\item \emph{\textbf{RANDOM} (Random Redundant Requesting)}, which randomly requests redundant videos to protect privacy. 
{\color{black}
\item \emph{\textbf{GIAS} (Generating Initial Anonymous Set)}~\cite{Zhang2023}, which preserves location privacy in edge computing by generating an initial $K$-anonymous set when a user sends a location-based query. We have modified the GIAS algorithm to handle client requests for video services by setting $K$ the same as the average number of requests in cRVR under different testing conditions. 
Most of the parameter settings for GIAS are consistent with those provided in the original research paper. For the key weight coefficients $\alpha$ and $\beta$
, which are not explicitly defined in the article, we set 
$\alpha = 0.5$ and $\beta = 0.5$ based on the constraint $\alpha + \beta = 1$ in~\cite{Zhang2023}.

\item \emph{\textbf{PPVR} (Point Process-based Video Requesting)}~\cite{Shi2021}
is an innovative model based on point process model that predicts future video request rates and adapts to various device types. We implement PPVR on the UD side as a video request method, selecting videos with the highest predicted request rates as redundant videos.
All model settings remain consistent with 
those in 
the original work \cite{Shi2021}.}
\end{itemize} 

To keep a fair comparison, we fix the average number of requesting videos for RANDOM, GIAS, and PPVR, which is identical to that of cRVR in each experiment.

For the caching algorithms on the EC, we consider the following three baselines, which run with the same caching capacity as that of EVC in different cases:

\begin{itemize} 
\item \emph{\textbf{LRU} (Least Recently Used)}, which replaces the least recently used video with a newly requested video.  
\item \emph{\textbf{LFU} (Least Frequently Used)}, which replaces the least frequently used video with a newly requested video. 
\item \emph{\textbf{PPVC} (Point Process-based Video Caching)}~\cite{Shi2021}, 
which is a learning-based caching algorithm 
that 
predicts future video request rates and caches videos with the highest predicted request rates. We have reused this method to optimize edge caching, maintaining the same settings as those in the city server case in the original work~\cite{Shi2021}.
\end{itemize}

\subsection{\textbf{Metrics}} 
The following metrics are used to evaluate the performance of different algorithms:
\begin{itemize}
\item {\color{black} \emph{\textbf{CHR} (Cache Hit Ratio)},  which is defined as the number of watched video hits at all UDs divided by the total number of watched videos over the entire test period. CHR is the most widely used metric for evaluating caching performance.}

\item \emph{\textbf{PDR} (Privacy Disclosure Ratio)}, which is computed by $$PDR = \frac{\sum_i H(r_{u,i}^T)}{\sum_i H(w_{u,i}^T)}$$ according to Eq.~\eqref{EQ:H_ui}, where $T$ is the last time slot in the dataset. 
A lower value of \emph{PDR} implies less privacy exposure of a UD.
{\color{black}
\item \emph{\textbf{BCR} (Bandwidth Consumption Ratio)}, which is calculated by dividing the actual bandwidth consumption of the request algorithm by the bandwidth consumption of NR, not requesting any additional video. A higher BCR implies more bandwidth consumption by the request algorithm. To assess the impact of the request algorithm on both UDs and the core network on the CP, we test BCR separately for each side.
}

\item \emph{\textbf{BOR} (Bandwidth Offloading Ratio)}, which is defined as the fraction of bandwidth served by the EC for all requests. BOR can be used to measure the caching efficiency of the EC. 
\end{itemize}

\begin{figure*}[tbp]
\includegraphics[width=0.92\linewidth]{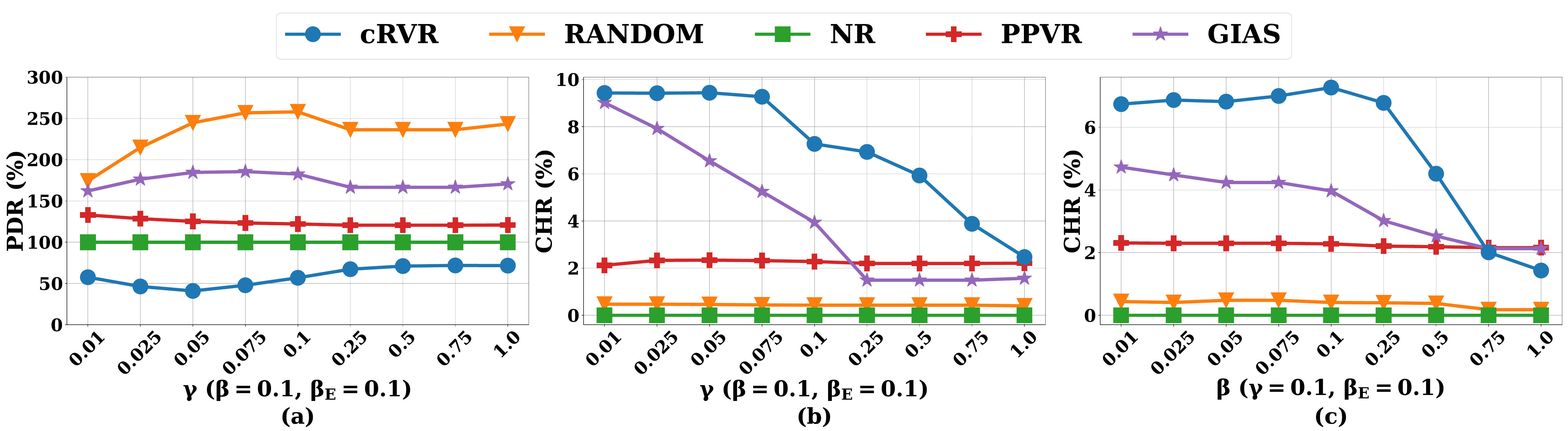}
\vspace{-2mm}
\caption{\color{black}Comparing privacy and caching performance at UDs tested on the TC Dataset with EVC running at the edge. A lower PDR is preferred to protect private information and a higher CHR at the UDs results in a better viewing QoE.}
\vspace{-4mm}
\label{Fig:tc_UD}
\end{figure*}

\begin{figure*}
\includegraphics[width=0.92\linewidth]{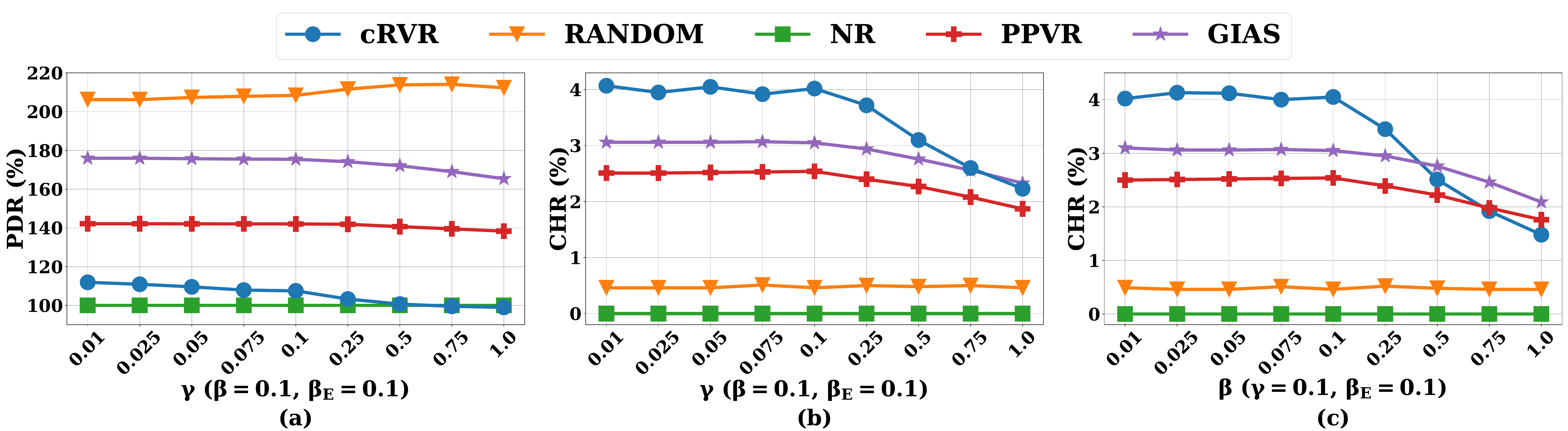}
\vspace{-2mm}
\caption{\color{black}Comparing privacy and caching performance at UDs tested on the ML-25M Dataset with EVC running at the edge. A lower PDR is preferred to protect private information and a higher CHR at the UDs results in a better viewing QoE.}
\label{Fig:UD_ml25m}
\vspace{-4mm}
\end{figure*}

\vspace{-4mm}
\subsection{Experimental Results}

We implement our EVC algorithm to determine videos cached on the EC based on requests from UDs. 
Then, the cRVR, PPVR, GIAS, RANDOM, and NR algorithms are implemented on UDs separately to compare their performance in terms of privacy protection and caching efficiency. 

\vspace{-2mm}
\subsubsection{\textbf{
Privacy Protection Performance on UDs}} 
We measure the value of PDR for each video requesting algorithm on UDs by tuning the privacy weight $\gamma$ from $0.01$ to $1.0$.
Through comparing cRVR with baselines in Fig.~\ref{Fig:tc_UD}(a) using the 
TC dataset, we draw the following 
observations:
\begin{itemize}
    \item cRVR achieves the lowest PDR for all experiment cases by reducing PDR 
    \textcolor{black}{59.03\%} at most and \textcolor{black}{40.94\%} on average against the NR strategy. 
    \item PPVR requests videos based on UDs' interest and thereby the PDR of PPVR is higher than NR.
    \item RANDOM is the worst one, implying that blindly generating redundant requests cannot conceal privacy.
    \item cRVR is more conservative as the weight of the privacy disclosure term $\gamma$ in the utility function gets bigger. In other words, UDs' utility will be dominated by privacy disclosure with a large $\gamma$, and thus cRVR will only generate requests due to privacy benefits. 
\end{itemize}

We conduct additional simulations using the ML-25M dataset under the same settings. The comparison of the PDR is depicted in Fig.~\ref{Fig:UD_ml25m}(a), revealing a distinct pattern from the one illustrated in Fig.~\ref{Fig:tc_UD}(a). Specifically, the PDR of cRVR exhibits a decrease correlated with the increment of the privacy weight parameter and the second-best PDR is observed when $\gamma$ is small.
This phenomenon arises due to the sparsity records of ML-25M~\cite{movieLens}, leading to a scarcity of popular video in the online video streaming system. Caching videos driven by the utility based on personal preferences poses substantial privacy risks for all algorithms, except NR, which exclusively requests videos actively watched by users.
Nevertheless, as 
$\gamma$ increases, for instance, when $\gamma=0.75$ and $1$, our approach generates requests based on privacy benefits, resulting in lowest PDR values (e.g., \textcolor{black}{$\text{PDR} = 99.50\%$} and \textcolor{black}{$98.96\%$}) compared to 
all other baselines. Remarkably, as shown in Fig.~\ref{Fig:tc_UD}(a), cRVR also demonstrates enhanced performance with higher request arrival rates. These findings collectively demonstrate the adaptability of our cRVR across diverse online video systems with different viewing patterns.

\begin{table*}[ht]
\centering
\caption{\color{black}BCR performance in the UD side and the CP side while varying UDs' cost parameter $\beta$ under the TC dataset.  A lower BCR is preferred in resource consumption.}
\label{table:consumption in TC}
\renewcommand{\arraystretch}{1.2}
\resizebox{0.8\linewidth}{!}{
\begin{tabular}{c|c|>{\centering\arraybackslash}p{1cm}|>{\centering\arraybackslash}p{1cm}|>{\centering\arraybackslash}p{1cm}|>{\centering\arraybackslash}p{1cm}|>{\centering\arraybackslash}p{1cm}|>{\centering\arraybackslash}p{1cm}|>{\centering\arraybackslash}p{1cm}|>{\centering\arraybackslash}p{1cm}|>{\centering\arraybackslash}p{1cm}}
\toprule
\multirow{2}{*}{BCR} &\multirow{2}{*}{Methods} & \multicolumn{9}{c}{UDs' Requesting and Caching Cost $\beta$ } \\ 
\cline{3-11}
&&0.01  &   0.025 & 0.05 &  0.075 & 0.1 & 0.25 & 0.5  &   0.75& 1.0 \\ 
\midrule
\multirow{4}{*}{\makecell{UD Side}} &cRVR & \textbf{14.60}&\textbf{13.81}&\textbf{12.85}&\textbf{12.06}&\textbf{11.17}&\textbf{6.56}&\textbf{1.83}&\textbf{1.15}&\textbf{1.09}\\  
&RANDOM & 17.34&16.255&15.16&15.16&14.08&8.62&5.36&2.09&2.09\\ 
&PPVC & 15.17&15.12&13.80&13.80&13.09&7.33&3.64&1.67&1.67\\ 
&GIAS & 15.12&14.28&13.39&13.39&12.54&7.09&2.33&1.59&1.59 \\
\midrule
\multirow{4}{*}{\makecell{CP Side}}& cRVR &
\textbf{3.26} &\textbf{3.21} &\textbf{3.22} &\textbf{3.17} &\textbf{3.14} &\textbf{2.68} &\textbf{1.43} &\textbf{1.07} &\textbf{1.05}  \\  
&RANDOM & 16.68 &15.83 &14.19 &14.19 &13.57 &8.18 &4.97 &1.95 &1.95 \\ 
&PPVC & 3.86 &3.94 &4.03 &4.03 &4.24 &3.16 &2.86 &1.52 &1.52 \\ 
&GIAS &5.97 &5.82 &5.62 &5.62 &5.41 &4.95 &2.07 &1.34 &1.34 \\  
\bottomrule
\end{tabular}
}
\end{table*}

\begin{table*}[t]
\centering
\caption{\color{black}BCR performance in the UD side and the CP side while varying UDs' cost parameter $\beta$ under the ML-25M Dataset. A lower BCR is preferred in resource consumption.}
\renewcommand{\arraystretch}{1.25}
\resizebox{0.8\linewidth}{!}{
\begin{tabular}{c|c|>{\centering\arraybackslash}p{1cm}|>{\centering\arraybackslash}p{1cm}|>{\centering\arraybackslash}p{1cm}|>{\centering\arraybackslash}p{1cm}|>{\centering\arraybackslash}p{1cm}|>{\centering\arraybackslash}p{1cm}|>{\centering\arraybackslash}p{1cm}|>{\centering\arraybackslash}p{1cm}|>{\centering\arraybackslash}p{1cm}}
\toprule
\multirow{2}{*}{BCR} &\multirow{2}{*}{Methods} & \multicolumn{9}{c}{UDs' Requesting and Caching Cost $\beta$} \\ 
\cline{3-11}
&&0.01  &   0.025 & 0.05 &  0.075 & 0.1 & 0.25 & 0.5  &   0.75& 1.0 \\ 
\midrule
\multirow{4}{*}{UD Side}&cRVR & \textbf{4.73} &\textbf{4.65} &\textbf{4.54} &\textbf{4.48} &\textbf{4.41} &\textbf{4.11} &\textbf{3.69} &\textbf{3.24} &\textbf{2.64}\\  
&RANDOM & 41.58 &40.61 &38.63 &37.62 &36.65 &31.68 &24.76 &19.81 &14.86 \\ 
&PPVC & 27.65 &26.69 &24.69 &23.70 &22.70 &17.73 &10.79 &5.86 &2.16 \\ 
&GIAS & 32.18 &31.42 &29.94 &29.21 &28.45 &23.95 &19.45 &15.65 &12.61 \\
\midrule
\multirow{4}{*}{CP Side}& cRVR & \textbf{3.04} &\textbf{3.01} &\textbf{3.00} &\textbf{2.99} &\textbf{2.97} &\textbf{2.86} &\textbf{2.66} &\textbf{2.38} &\textbf{2.00} \\  
&RANDOM & 19.23 &19.22 &19.18 &19.16 &19.10 &18.67 &17.26 &15.52 &12.98 \\ 
&PPVC &7.71 &7.55 &7.18 &7.00 &6.96 &5.83 &4.19 &3.77 &2.43 \\ 
&GIAS & 7.33 &7.31 &7.24 &7.20 &7.16 &6.90 &6.54 &6.10 &5.64 \\  
\bottomrule
\end{tabular}
}
\label{table:consumption in ML-25M}
\end{table*}

\begin{table*}[t]
\caption{\color{black}Results of the Cache Churn Rate of cRVR under different parameters settings in the TC and ML-25M datasets.}\label{table:churn rate}
\centering
\renewcommand{\arraystretch}{1.2}
\resizebox{0.58\linewidth}{!}{
\begin{tabular}{c|c|c|c|c|c|c}
\toprule
\multicolumn{2}{c|}{UDs' Privacy Cost $\gamma$}& 0.01  &   0.05  & 0.1  & 0.5 & 1.0 \\ 
\midrule
\multirow{2}{*}{\makecell{Cache Churn Rate \\ ( $\beta_E=0.1, \gamma=0.1$ )}}& TC Dataset & 0.385& 0.139   & 0.055& 0.017& 0.001 \\ 
\cline{2-7}
& ML-25M & 0.068& 0.067   & 0.066& 0.046& 0.019 \\ 
\midrule
\multicolumn{2}{c|}{UDs' Requesting and Caching Cost $\beta$} & 0.01  &   0.05  & 0.1  & 0.5 & 1.0  \\ 
\midrule
\multirow{2}{*}{\makecell{Cache Churn Rate \\ ( $\beta_E=0.1, \gamma=0.1$ )}}& TC Dataset& 0.078& 0.068   & 0.055& 0.004& 0.001 \\ 
\cline{2-7}
& ML-25M & 0.067& 0.066  & 0.065 &0.053 & 0.032 \\ 
\bottomrule
\end{tabular}
}
\vspace{-4mm}
\end{table*}

\subsubsection{\textcolor{black}{\textbf{
Caching Performance and Resource Consumption on UDs}}} 
We first conduct two groups of experiments on the TC dataset to evaluate the caching performance of each algorithm on UDs. 
We only consider the probability that each redundant video is watched in the future by UDs, which is measured by CHR. 
In the first group of experiments, we vary $\gamma$ by fixing $\beta=0.1$ and $\beta_E=0.1$, while varying $\beta$ by fixing $\gamma=0.1$ and $\beta_E=0.1$ in the second group to evaluate CHR under different settings. 
The results are shown in Fig.~\ref{Fig:tc_UD}(b) and Fig.~\ref{Fig:tc_UD}(c), from which we can observe that cRVR achieves the highest overall CHR in our experiments and RANDOM is the worst.
NR is the benchmark without any redundant requesting and local caching. 
cRVR performs only slightly worse than the state-of-the-art 
GIAS 
and PPVC algorithms when $\beta$ is very large, as it prioritizes minimizing the requesting and caching costs.
\textcolor{black}{To provide further clarification, we measure additional bandwidth consumption introduced by different request algorithms. Note that, to ensure a fair comparison, we have fixed the upper bound on the number of redundant videos for RANDOM, 
GIAS, 
and PPVR to match that of cRVR in each experiment. As shown in Table~\ref{table:consumption in TC}, 
with 
the same upper limit of requested videos, our algorithm demonstrates a lower BCR under different conditions. This is because, when the caching cost weight parameter is small, cRVR assigns a heavier weight on privacy, carefully selecting videos with higher privacy for caching. However, when the requesting and caching cost weight parameters are higher, cRVR reduces redundant bandwidth consumption to better accommodate resource-constrained devices.
}

To demonstrate the applicability of our proposed algorithm, we also replicate the experiments on the ML-25M dataset, as depicted in Fig.~\ref{Fig:UD_ml25m}(b)-(c) and Table~\ref{table:consumption in ML-25M}. These figures reveal the similar trends that have been presented in Fig.~\ref{Fig:tc_UD}(b)-(c), where the CHR increases with the decrease of $\gamma$ or $\beta$. In the tests conducted on the ML-25M dataset, cRVR demonstrates superior caching performance across various parameter settings. Overall, cRVR exhibits the highest CHR in both datasets, surpassing all other algorithms in most cases. {\color{black}In particular, when users are more concerned about privacy (i.e., a larger $\gamma$), cRVR significantly outperforms the other algorithms. It indicates that the privacy and caching efficiency can be well balanced in our algorithm.  Meanwhile, as shown in Table~\ref{table:consumption in ML-25M}, cRVR consistently achieves the lowest BCR on the UD side across all cost parameter settings, further demonstrating its feasibility.

Lastly, we present the cache churn rate of cRVR under different parameters, $\gamma$ or $\beta$, for the both datasets in Table~\ref{table:churn rate}. 
The cache churn rate is calculated by dividing the number of replaced videos in each round by the total cache capacity and is then averaged across all clients.
As shown in Table~\ref{table:churn rate}, the cRVR algorithm exhibits a low cache churn rate on the UD side, demonstrating that cRVR can achieve effective privacy protection and high user QoE with an acceptable caching resource consumption.
}

\begin{figure*}[t]
\begin{center}
\hspace{20mm}
\begin{minipage}[t]{0.49\textwidth}
\begin{subfigure}{0.7\linewidth}
\includegraphics[width=\linewidth]{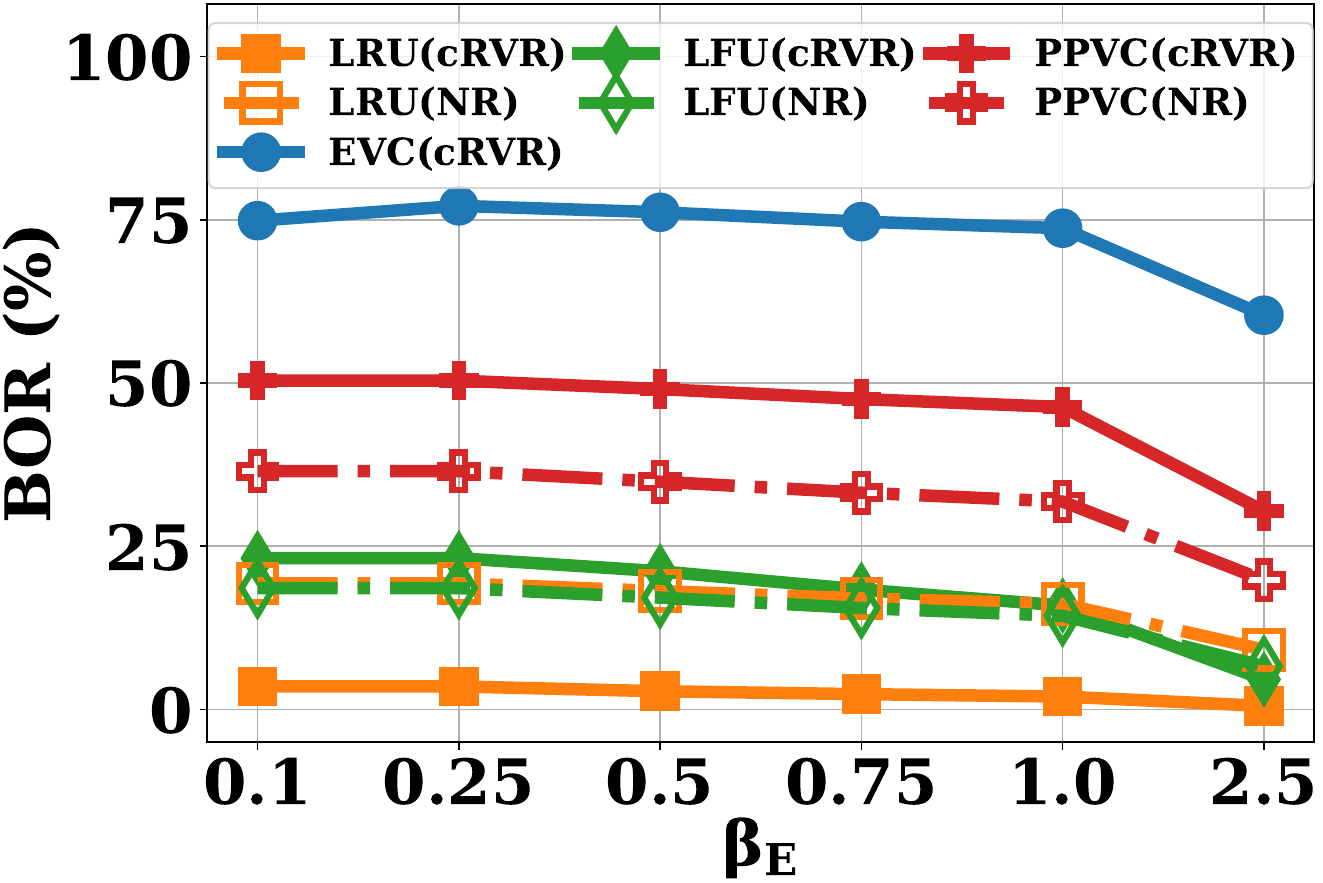}
\caption{TC Dataset.}
\label{Fig:BOR_beta_E}
\end{subfigure}
\end{minipage}
\hspace{-19mm}
\begin{minipage}[t]{0.49\textwidth}
\begin{subfigure}{0.7\linewidth}
\includegraphics[width=\textwidth]{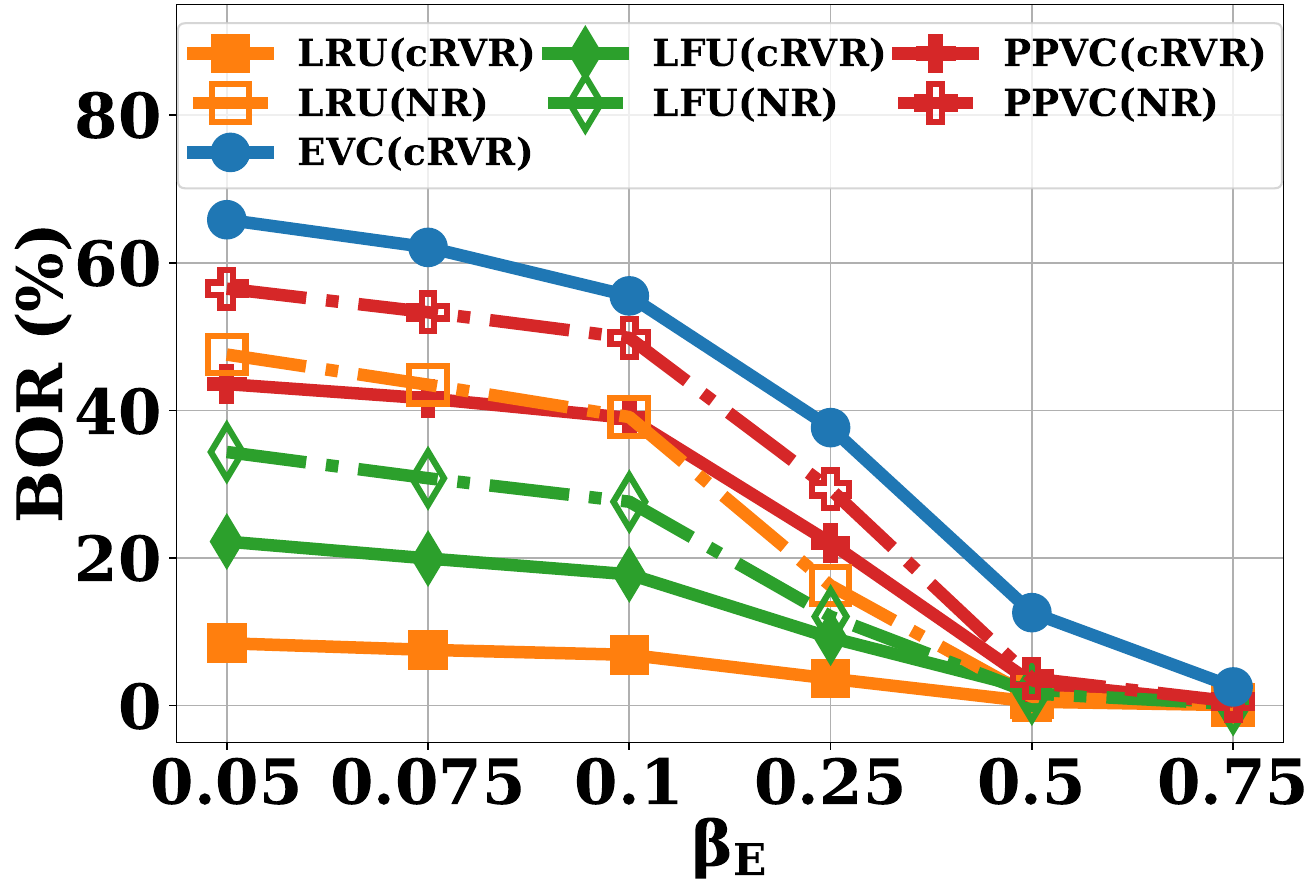}
\caption{ML-25M Dataset.}
\label{Fig:BOR_beta_E_ml}
\end{subfigure}
\end{minipage}
\caption{
EC's BOR when varying the caching cost $\beta_{E}$ with cRVR and NR on UDs.
NR is 
the benchmark for which the
EC makes caching decisions based on genuine user requests.}
\label{Fig:BOR}
\vspace{-6mm}
\end{center}
\end{figure*}

\vspace{-2mm}
\subsubsection{\textbf{
Impacts on 
CP and EC}} 
{\color{black}
To evaluate the influence of cRVR on the content provider (CP), we compare 
BCR on the CP side and BOR in EC under varying cost parameters $\beta$ and $\beta_E$ for both 
TC and ML-25M datasets, as shown in Tables~\ref{table:consumption in TC}-\ref{table:consumption in ML-25M} and Fig.~\ref{Fig:BOR}, respectively. 

Firstly, to show the impact of the core network (or the CP), we test BCR for all requesting strategies. Here, BCR is defined as the ratio of the actual bandwidth consumption of the request algorithm to the consumption of the NR method, which does not request any additional videos. 
From the tables, it is evident that cRVR demonstrates the lowest BCR values across all cost parameters compared to other strategies like RANDOM, PPVC, and GIAS in both datasets. This indicates that cRVR has the least impacts on the CP, especially under higher cost conditions, 
highlighting the effectiveness of our game design in minimizing the influence of redundant requests on the CP. In particular, the influence of cRVR becomes negligible when the cost parameter 
$\beta$ is lower, meaning that UDs can alleviate the impact on the CP by generating redundant requests at a lower frequency. Therefore, 
with the help of cRVR, UDs can manage redundant requests more efficiently, reducing the load on the core network.
}

We continue to compare the EC performance in terms of BOR by implementing different algorithms.  We implement two algorithms on UDs: cRVR and NR. 
NR is used as the benchmark for which the EC can make caching decisions based on genuine user requests. With this benchmark, we can measure the influence caused by redundant requests. 
We implement our EVC algorithm, 
as well as LRU, LFU and PPVC algorithms on the EC for video caching. We vary $\beta_E$, the weight of caching cost on the EC as the x-axis and fix $\gamma=0.1$, $\beta=0.1$ for EVC. 
Other edge caching baselines are conducted with the same caching capacity as EVC.
The y-axis represents the BOR of each algorithm. 
The results obtained from two datasets are illustrated in Fig.~\ref{Fig:BOR}(\subref{Fig:BOR_beta_E}) and Fig.~\ref{Fig:BOR}(\subref{Fig:BOR_beta_E_ml}). In the TC dataset, our EVC algorithm attains the highest BOR, exhibiting an average improvement of approximately 27.15\% compared to the second-best algorithm, namely PPVC. Similarly, 
for the ML-25M dataset, EVC surpasses all baselines, showcasing an average 12.37\% improvement in BOR over PPVC.
These experiments validate that cRVR has minimal influence on the CP
through playing the Stackelberg game since most redundant requests can be served by the EC.
It is worth noting that cRVR has negligible impact on popularity-based edge caching algorithms (e.g., LFU and PPVC).
The reason is that cRVR mainly generates requests for popular videos such that viewing patterns of UDs are similar to each other to conceal viewing privacy. 
These popular videos will be cached by the EC with a higher probability resulting in a higher BOR.
Thereby, the influence of cRVR on the core network is insignificant since most redundant requests can be easily served by the EC.

\vspace{-2mm}
\subsubsection{\textbf{Variation Trend of UDs' Utility in Dynamic Games}}
We study the process of how UDs play a game to improve their utilities in Fig.~\ref{Fig:Utility_UD}. Specifically, we fix $\gamma=0.1$, $\beta=0.1$ and $\beta_E=0.1$. To observe the change of each term in the utility, we decompose the utility
into three parts: \textit{Utility-U(L)} 
(caching benefits), \textit{Utility-U(D)} 
(privacy disclosure), and \textit{Utility-U(C)} 
(caching cost), which correspond to three terms in Eq.~\eqref{Eq:U_u^t}, respectively. 
Each point in Fig.~\ref{Fig:Utility_UD} represents the average utility over all UDs. 
As we can see in Fig.~\ref{Fig:Utility_UD}, all utilities periodically fluctuate, but gradually 
become 
stable as UDs play the game for more iterations. 
The results indicate stability and reliability as UDs 
maximize their own utilities for generating video requests. 

\begin{figure}[!t]
\centering
\includegraphics[width=0.75\linewidth]{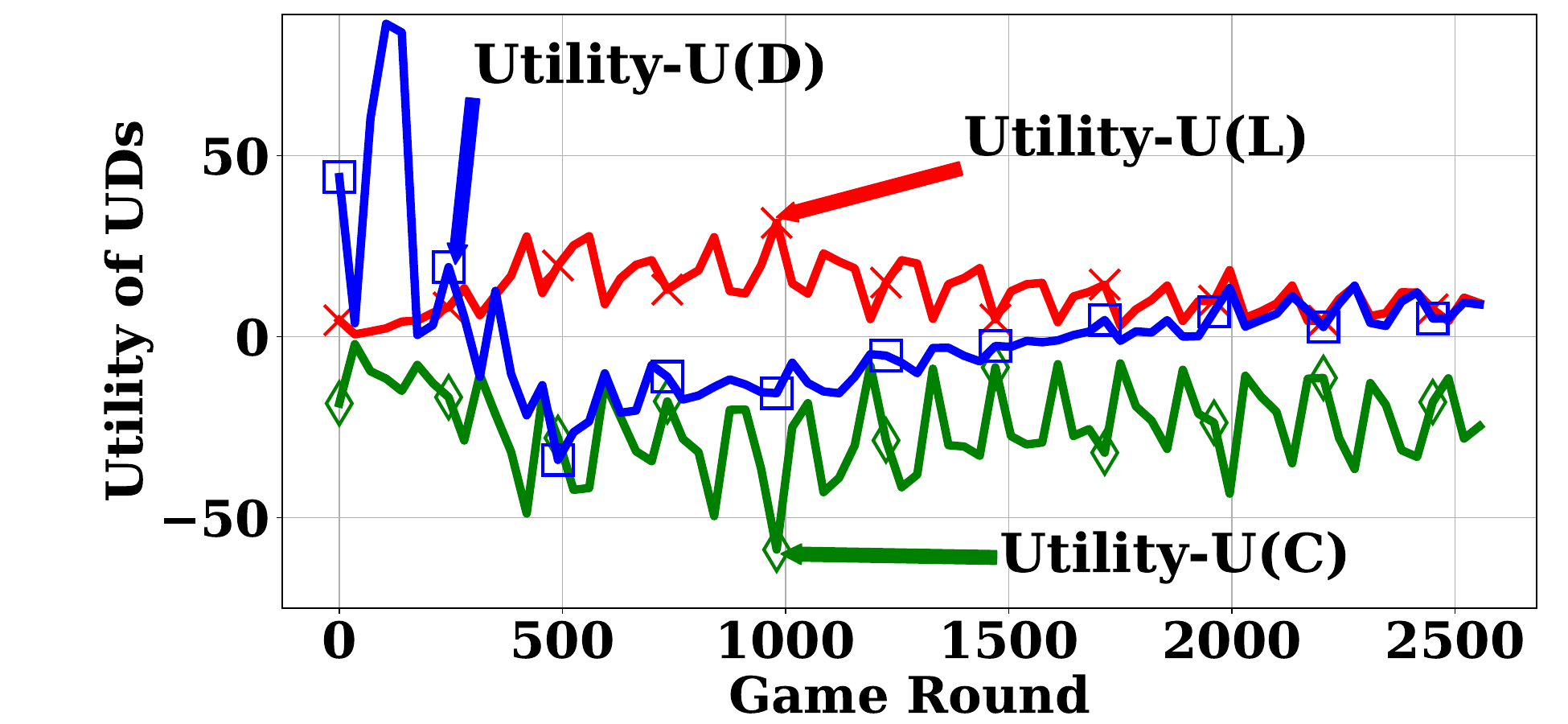}
\caption{The average UDs' utility in each term shown in Eq.~\eqref{Eq:U_u^t} under the default parameter settings (i.e., $\gamma = 0.1$, $\beta = 0.1$, $\beta_{E} = 0.1$) tested on the TC dataset.}
\label{Fig:Utility_UD}
\vspace{-4mm}
\end{figure}

\vspace{-2mm}
\section{Conclusion} \label{sec:conclusion}

Although online video services have been pervasively deployed, there is a lack of effective approaches to preserving request privacy. Since online video systems can automatically capture video request records, we are among the first to particularly investigate how to protect viewers' request privacy by generating redundant video requests.  To mitigate the adverse influence on edge cache, a Stackelberg game is established through which the 
edge cache and 
user devices 
can optimize their utilities.
The existence and uniqueness of NE (Nash Equilibrium) of the Stackelberg game guarantee the system's stability, which can optimally trade-off caching performance and privacy protection. Our research opens a new door to conceal privacy for users whose records 
are 
automatically exposed to the public.
\textcolor{black}{In the next step, we will investigate more advanced algorithms for privacy-enhancing edge caching and deploy them in real online video systems. Additionally, considering correlations between videos and detailed user interactions, such as fast forward or rewind, and pause or play, during video playback as a potential research direction will further enhance privacy protection. Obfuscating these interactions from the user side could further reduce the risk of leaking user preferences to content providers.}

\vspace{-2mm}
\bibliographystyle{IEEEtran}
\bibliography{IEEEabrv,reference}


\vspace{-13mm}
\begin{IEEEbiography}
[{\includegraphics[width=1in,height=1.25in,clip,keepaspectratio]{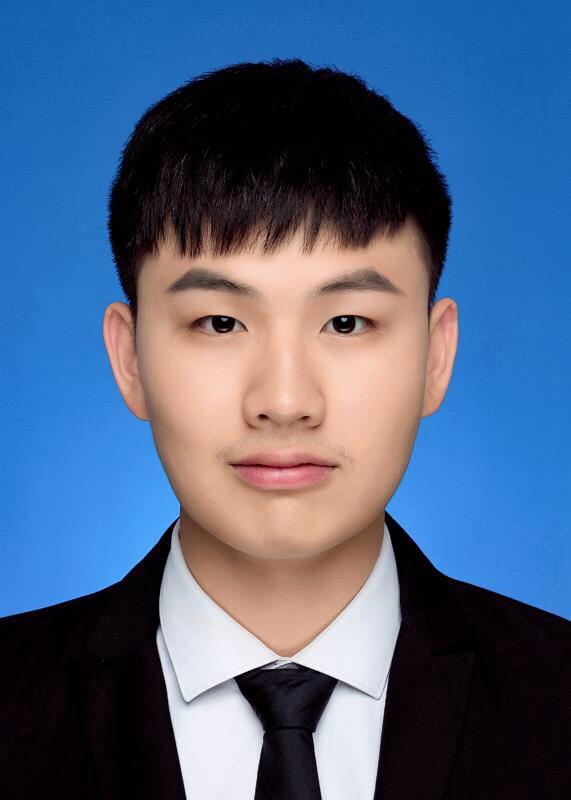}}]
{Xianzhi Zhang} received his B.S. degree from Nanchang University (NCU), Nanchang, China, in 2019 and an M.S. degree from the School of Computer Science and Engineering, Sun Yat-sen University (SYSU), Guangzhou, China, in 2022.  He is currently pursuing a Ph.D. degree in the School of Computer Science and Engineering at Sun Yat-sen University, Guangzhou, China. He is also 
a visiting PhD student at the School of Computing,  Macquarie University, Sydney, Australia. Xianzhi's current research interests include video caching, privacy protection, federated learning, and edge computing. His researches have been published at IEEE TPDS and TSC, and won the Best Paper Award at PDCAT 2021. 
\vspace{-10mm}
\end{IEEEbiography}

\begin{IEEEbiography}[{\includegraphics[width=1in,height=1.25in,clip,keepaspectratio]{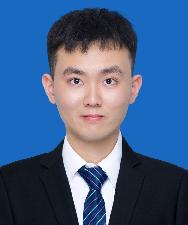}}]
{Linchang Xiao} received his B.S. Degree from the School of Computer Science and Engineering, Sun Yat-sen University (SYSU), Guangzhou, China, in 2022. He is currently working toward the M.S. degree at the School of Computer Science and Engineering, Sun Yat-sen University (SYSU), Guangzhou, China. His research interests include cloud \& edge computing, content caching, and multimedia communication. 
\vspace{-10mm}
\end{IEEEbiography}

\begin{IEEEbiography}[{\includegraphics[width=1in,height=1.25in,clip,keepaspectratio]{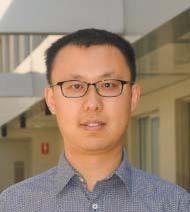}}]
{Yipeng Zhou} is a senior lecturer 
in School of Computing at Macquarie University, and the recipient of ARC DECRA in 2018. From Aug. 2016 to Feb. 2018, he was a research fellow with Institute for Telecommunications Research (ITR) of University of South Australia. From 2013.9-2016.9, He was a lecturer in College of Computer Science and Software Engineering, Shenzhen University. He was a Postdoctoral Fellow with Institute of Network Coding (INC) of The Chinese University of Hong Kong (CUHK) from Aug. 2012 to Aug. 2013. He won his PhD degree and Mphil degree from Information Engineering (IE) Department of CUHK respectively. He got Bachelor degree in Computer Science from University of Science and Technology of China (USTC). His research interests lie in federated learning, privacy protection and caching algorithm design in networks. He has published more than 80 papers including IEEE INFOCOM, ICNP, IWQoS, IEEE ToN, JSAC, TPDS, TMC, and TMM.
\vspace{-10mm}
\end{IEEEbiography}

\begin{IEEEbiography}[{\includegraphics[width=1in,height=1.25in,clip,keepaspectratio]{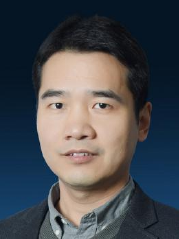}}]
{Di Wu}(M'06-SM'17) received the B.S. degree from the University of Science and Technology of China, Hefei, China, in 2000, the M.S. degree from the Institute of Computing Technology, Chinese Academy of Sciences, Beijing, China, in 2003, and the Ph.D. degree in computer science and engineering from the Chinese University of Hong Kong, Hong Kong, in 2007. He was a Post-Doctoral Researcher with the Department of Computer Science and Engineering, Polytechnic Institute of New York University, Brooklyn, NY, USA, from 2007 to 2009, advised by Prof. K. W. Ross. Dr. Wu is currently a Professor and the Associate Dean of the School of Computer Science and Engineering, Sun Yat-sen University, Guangzhou, China. He was the recipient of the IEEE INFOCOM 2009 Best Paper Award and IEEE Jack Neubauer Memorial Award.
His research interests include edge/cloud computing, multimedia communication, Internet measurement, and network security.
\vspace{-10mm}
\end{IEEEbiography}

\begin{IEEEbiography}[{\includegraphics[width=1in,height=1.25in,clip,keepaspectratio]{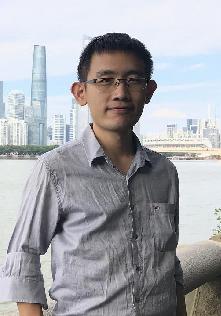}}]
{Miao Hu} (S'13-M'17) is currently an Associate Professor in the School of Computer Science and Engineering, Sun Yat-Sen University, Guangzhou, China. He received the B.S. degree and the Ph.D. degree in communication engineering from Beijing Jiaotong University, Beijing, China, in 2011 and 2017, respectively. From Sept. 2014 to Sept. 2015, he was a Visiting Scholar with the Pennsylvania State University, PA, USA. His research interests include edge/cloud computing, multimedia communication and software defined networks.
\vspace{-10mm}
\end{IEEEbiography}

\begin{IEEEbiography}[{\includegraphics[width=1in,height=1.25in,clip,keepaspectratio]{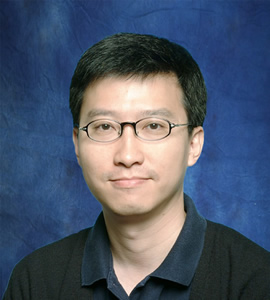}}]
{John C.S. Lui} is currently the Choh-Ming Li Chair Professor in the Department of Computer Science \& Engineering (CSE) at The Chinese University of Hong Kong (CUHK). He received his Ph.D. in Computer Science from UCLA. After his graduation, he joined the IBM Lab and participated in research and development projects on file systems and parallel I/O architectures. He later joined the CSE Department at CUHK. 
Currently, he is leading 
the Advanced Networking and System Research Laboratory (ANSRLab). His research interests include online learning algorithms and applications (e.g., multi-armed bandits, reinforcement learning), quantum Internet, machine learning on network sciences and networking systems, large-scale data analytics, network/system security, network economics, large scale storage systems and performance evaluation theory. 
John is 
a senior editor in the IEEE/ACM Transactions on Networking, and has been serving in the editorial board of several leading journals including ACM Transactions on Modeling and Performance Evaluation of Computing Systems, IEEE Transactions on Network Science \& Engineering, IEEE Transactions on Mobile Computing, IEEE Transactions on Computers, and IEEE Transactions on Parallel and Distributed Systems.
\vspace{-10mm}
\end{IEEEbiography}

\begin{IEEEbiography}[{\includegraphics[width=1in,height=1.25in,clip,keepaspectratio]{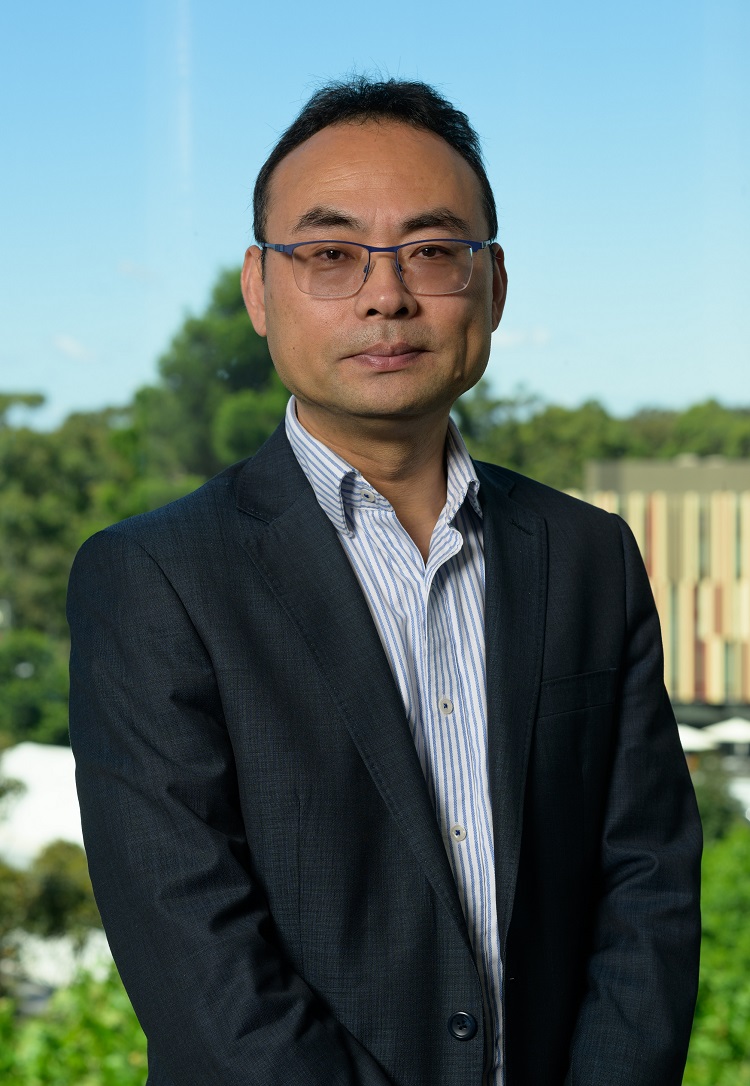}}]
{Quan Z. Sheng} is a Distinguished Professor and Head of School of Computing at Macquarie University. Before moving to Macquarie, Michael spent 10 years at School of Computer Science, the University of Adelaide, serving in senior leadership roles such as Interim Head of School and Deputy Head of School. Michael holds a PhD degree in computer science from the University of New South Wales (UNSW) and did his post-doc as a research scientist at CSIRO ICT Centre. 
%
Prof. Michael Sheng's research interests include Web of Things, Internet of Things, Big Data Analytics, Web Science, Service-oriented Computing, Pervasive Computing, and Sensor Networks. He is ranked by Microsoft Academic as one of the Most Impactful Authors in Services Computing (Top 5 all time), and is ranked by ScholarGPS as Top 5 Lifetime in Web Information Systems. 
He is the recipient of the AMiner Most Influential Scholar Award on IoT (2019), ARC Future Fellowship (2014), Chris Wallace Award for Outstanding Research Contribution (2012), and Microsoft Research Fellowship (2003). Prof Sheng is Vice Chair of the Executive Committee of the IEEE Technical Community on Services Computing (IEEE TCSVC) 
and a member of the ACS Technical Advisory Board on IoT.
\end{IEEEbiography}

\clearpage
\begin{appendices}

\begin{figure*}[tbp]
\includegraphics[width=\linewidth]{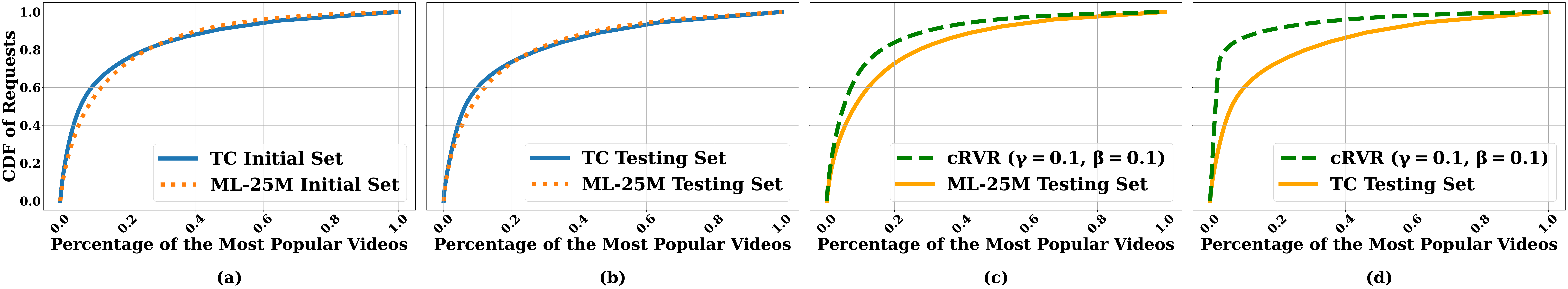}
\vspace{-4mm}
\caption{CDF of Video Request Popularity Distribution, where (a) and (b) illustrate the CDF of video requests based on the most popular videos across two datasets while (c) and (d) show the impact of redundant requests introduced by the cRVR algorithm on the popularity distribution in both the TC and ML-25M testing sets.}
\label{Fig:cdf}
\end{figure*}
    
\section{Proof of Theorem \ref{theorem:eds exist}}\label{appendix:eds exist}
\begin{proof}
When $\bm{x}_{u,i}^t=1$, the optimal strategy $a_{u,i}^{t*}=1$ to meet the viewing demand.
Without loss of gentility, we use $y_{u,i}^t$  to analyze instead of $a_{u,i}^t$.
Recall that $N_i^t = n_i^t - m_i^t- 2\Delta n_i^t$.
The first order differential of the UD’s utility with respect to $y_{u,i}^{t}$ is 
\begin{equation}
\frac{\partial \mathrm{U}_{u}^{t}}{\partial y_{u,i}^{t}} 
=d_{u,i}^{t}\hat{p}_{i}^{t}c_{i}(1-e_i^{t})-\beta\,c_i\,\epsilon_u+\frac{\gamma\left(1-r_{u,i}^t\right)N_i^t}{y_{u,i}^{t}\,N_i^t-(n_i^{t}-\Delta n_i^{t})}.
\end{equation}

The second order differential of the UD’s utility with
respect to $y_{u,i}^{t}$ is \begin{equation}
\frac{\partial^2 \mathrm{U}_{u}^{t}}{{\partial y_{u,i}^{t}}^2} 
= -\gamma\left(1-r_{u,i}^t\right)\left[\frac{N_i^t}{y_{u,i}^{t}\,N_i^t-(n_i^{t}-\Delta n_i^{t})}\right]^2.
\end{equation}

We consider two kinds of UDs.
First, we consider the UD $u$ who has not requested content $i$ before period $t$, i.e., $r_{u,i}^t=0$.
Correspondingly, the second order differential of the UE’s utility 
$\frac{\partial^2 \mathrm{U}_{u}^{t}}{{\partial y_{u}^{t}}^2}$ is less than $0$, 
where there exists a maximum utility. Let $\frac{\partial \mathrm{U}_{u}^t}{\partial \ y_{u,i}^t}=0$, the optimal strategy can be obtained by a best respond function to other UDs in Eq.\eqref{Eq:response function case a} with the constrain $y_{u,i}^{t}\in[0,1]$. 
Next, we consider a UD $u$ in status $r_{u,i}^t=1$, and the first order differential of the UE's utility becomes 
\begin{equation}
    \frac{\partial \mathrm{U}_{u}^{t}}{\partial \ y_{u,i}^{t}} 
=d_{u,i}^{t}\,\hat{p}_{i}^{t}\,c_{i}\,(1-e_i^{t})-\beta\,c_i\,\epsilon_u.
\end{equation}
Here, the value of the first order differential is a constant when $e_i^{t}$ is given and the second order differential of the UE's utility is $0$. 
Therefore, the UD’s utility is a monotone function. The optimal strategy $y_{u,i}^{t*}$, shown in Eq.\eqref{Eq:response function case b}, is obtained at $y_{u,i}^{t*} = 0$ or $y_{u,i}^{t*} = 1$ depending on the value difference between $\mathrm{U}_{u,i}^t(e_i^t,1,\bm{a}_{-u,i}^t)$ and $\mathrm{U}_{u,i}^t(e_i^t,0,\bm{a}_{-u,i}^t)$\footnote{When $\frac{\partial\, \mathrm{U}_{u}^t}{\partial a_{u,i}^t}=0$, all strategies have the same utility and we define the $y_{u,i}^{t*}= 0$ is the only optimal strategy.}.
In a word, the best cRVR strategy of UD $u$ who has requested on content $i$ before time $t$ can be summarized in Eq.\eqref{Eq:response function case} with $r_{u,i}^t=1$.
Theorem \ref{theorem:eds exist} is 
therefore 
proved.
\end{proof}

\section{Supplementary Proof  of Theorem \ref{theorem:eds unique}}\label{appendix: eds unique}
\begin{proof}
In this subsection, we show that $y_{u,i}^{t*}(\Delta n_i^{t*})$ in Eq.~\eqref{Eq:response function case a} satisfies the three properties of a standard function.

\vspace{2mm}
\noindent (1) \textit{Positivity}: The best response of the $y_{u,i}^{t*}$ is always positive, as $y_{u,i}^{t*}(\Delta n_i^{t*})\in[0,1]$.

\vspace{2mm}
\noindent (2) \textit{Monotonicity}: the first order differential of the function $\Omega(\Delta n_i^{t*})$ is 
\begin{equation}
    \frac{\partial\, \Omega(\Delta n_i^{t*}) }{\partial \Delta n_i^{t*}}=\frac{m_i^t+n_i^t}{\left(n_i^t-m_i^t-2\Delta n_i^{t*}\right)^{2}}>0.
\end{equation}
Therefore, for any $\Delta n_i^{t*}\geq{\Delta n_i^{t*}}'$, we have $\Omega(\Delta n_i^{t*})\geq{\Omega(\Delta n_i^{t*})}'$. Recall that $\Delta n_i^{t*} = \sum_{\forall u': r_{u',i}^t =0}a_{u',i}^{t*}$. 
When $\bm{a}_{-u,i}^{t*}\succeq{\bm{a}_{-u,i}^{t*}}'$, we have $\Delta n_i^{t*}\geq\Delta {n_i^{t*}}'$.
Therefore, the response function $\Omega(\Delta n_i^{t*})$ is monotonic to any $\bm{a}_{-u,i}^{t*}$.
Considering $y_{u,i}^{t*}(\Delta n_i^{t*}) = \max\{0,\min\{\Omega(\Delta n_i^{t*}),1\}\}$ in Eq.~\eqref{Eq:response function case a}, $y_{u,i}^{t*}(\Delta n_i^{t*})$ is monotonic to any $\bm{a}_{-u,i}^{t*}$.

\vspace{2mm}
\noindent (3) \textit{Scalability}: For all $k\geq1$, we need to prove that $k\cdot y_{u,i}^{t*}(\Delta n_i^{t*})\geq y_{u,i}^{t*}(k\cdot\Delta n_i^{t*})$.  
For simplicity, 
let \begin{equation}
    g(k)=\frac{ n_i^t-\Delta n_i^{t*}}{n_i^t-m_i^t-2\,\Delta n_i^{t*}}-\frac{ n_i^t-k\,\Delta n_i^{t*}}{n_i^t-m_i^t-2\,k\,\Delta n_i^{t*}},
\end{equation} 
where 
\begin{equation}
g'(k)=\frac{\Delta n_i^{t*}(n_i^t+m_i^t)}{(n_i^t-m_i^t-2\,k\,\Delta n_i^{t*})^2}\geq 0,\end{equation}
Considering $y_{u,i}^{t*}(\Delta n_i^{t*})= \max\{0,\min\{\Omega(\Delta n_i^{t*}),1\}\}\in[0,1]$ in Eq.~\eqref{Eq:response function case a}, we first prove $k\cdot \Omega_{u,i}(\Delta n_i^{t*})- \Omega_{u,i}(k\cdot\Delta n_i^{t*})\geq 0$ as follows.
\begin{equation}
\begin{aligned}
&k\Omega(\Delta n_i^{t*})-\Omega(k\,\Delta n_i^{t*})\\
=&(k-1)(\frac{\,\gamma}{\beta\,c_i\,\epsilon_u-c_i\,d_{u,i}^t\,\hat{p}_{i}^t(1-e_i^t)}+\frac{n_i^t-\Delta n_i^{t*}}{N_i^{t*}})+g(k)\\
\geq&(k-1)y_{u,i}^{t*}(\Delta n_i^{t*})+g(k)
\geq g(1)=0
\end{aligned}
\end{equation}
Therefore, $y_{u,i}^{t*}(\Delta n_i^{t*})$ is scalable to any $\bm{a}_{-u,i}^{t*}$ as well.
We prove that the response function is standard. 

\end{proof}

\section{Proof of Theorem \ref{theorem:EC}}\label{appendix of theorem EC}
\begin{proof}
The first-order differential of the EC’s utility with respect to $e_i$ is
\begin{equation}
\frac{\partial \mathrm{U}_{E}}{\partial e_i^{t}} 
= \frac{c_i\,\sum_{\forall u} a_{u,i}^{t*}}{1+e_i^{t}\,c_i} - \beta_E\,c_i\, \epsilon_E.
\end{equation}

Moreover, the second order differential of the EC’s utility with respect to $e_i^t$ is:
\begin{equation}
\frac{\partial^2 \mathrm{U}_{E}}{{\partial e_i^t}^2} 
= -\left(\frac{{c_i}}{1+e_i^t\,c_i}\right)^2 \,\sum_{\forall u} a_{u,i}^{t*}.
\end{equation}

We consider two kinds of situations, the first of which is $\sum_{u}a_{u,i}^{t*}>0$. 
Here, the second order differential of edge caching device utility $\frac{\partial^2 \ \mathrm{U}_{E}}{{\partial \ e_i^t}^2}$ is smaller than zero so that the utility function $\mathrm{U}_{E}$ is a strictly concave function, where there exists a maximum value in $[0,1]$ with the unique optimal strategy $e_i^{t*}$. The optimal strategy $e_i^{t*}$ can obtain directly by $\frac{\partial \mathrm{U}_{E}}{\partial  e_i^{t}} =0$.
As for $\sum_{u}a_{u,i}^t=0$, we have the first differential of utility function as \begin{equation}
    \frac{\partial\mathrm{U}_{E}}{\partial e_i^t} = -\beta_E\, \epsilon_E\,c_i<0,
\end{equation}
and the utility function is a strictly decreasing function with the unique best strategy $e_i^{t*}=0$.
Note that, the above analysis can be applied to any content $i$. 
Theorem \ref{theorem:EC} is therefore proved.

\end{proof}

\begin{table}[!ht]
\caption{Datasets Summary}
\centering
\renewcommand{\arraystretch}{1.2}
\resizebox{1\linewidth}{!}{
\begin{tabular}{c|c|c|c}
\toprule
Dataset & User Count & Content Count & Density \\ 
\midrule
TC & 1,000 & 20,999 & 1.0652\% \\ 
ML-25M & 1,000 & 10,002 & 1.5799\% \\ 
\bottomrule
\end{tabular}
}
\label{table:dataset_summary}
\end{table}

\section{Detailed Information of Datasets}\label{appendix: dataset}
{\color{black}
In this section, we provide a detailed overview of the two datasets used in our experiments: TC and ML-25M. 
Both datasets represent distinct real-world video streaming scenarios, featuring varying levels of content popularity and request distributions. 
Table~\ref{table:dataset_summary} provides a summary of the two datasets (TC and ML-25M) used in our experiments. It includes key details such as the number of users, the number of video contents, and the data density. As shown in the table, the TC dataset contains 1,000 users and 20,999 videos, with a density of 1.0652\%, while the ML-25M dataset also contains 1,000 users but has fewer video contents (10,002) and a higher density of 1.5799\%. 

Fig.~\ref{Fig:cdf} (a) and (b) display the CDF (cumulative distribution function) of users' requests for both the initial and testing sets of each dataset. From these figures, we observe that the TC dataset exhibits a relatively concentrated popularity distribution, allowing the cRVR algorithm to select more popular videos for redundant requests, thereby enhancing privacy protection. Additionally, by comparing Fig.~\ref{Fig:cdf} (a)-(b), we can see that the popularity distribution across the datasets remains relatively stable.
Fig.~\ref{Fig:cdf} (c)-(d) provide a comparison between the original request distribution and the distribution after applying the cRVR algorithm. The introduction of redundant requests by cRVR aims to obscure users’ true preferences while maintaining efficient caching performance. As demonstrated in these figures, cRVR alters the request patterns, effectively concentrating the requests and thus reducing the amount of exposed user information. This helps protect user privacy while also maintaining caching efficiency.
}

\end{appendices}

\end{document}